\documentclass{article}

\RequirePackage{amsthm,amsmath,amsfonts,amssymb}
\RequirePackage[numbers]{natbib}

\RequirePackage[colorlinks,citecolor=blue,urlcolor=blue]{hyperref}
\RequirePackage{graphicx}

\usepackage{subfigure}
\usepackage{xcolor}
\usepackage[utf8]{inputenc}

\usepackage[a4paper,
 total={165mm,247mm},
 left=25mm,
 top=20mm]{geometry}

\newcommand{\Z}{\mathbb Z}
\newcommand{\N}{\mathbb N}

\renewcommand{\P}{\mathbb P}

\newcommand{\E}{\mathbb E}

\newcommand{\pml}{\text{\sc{PML}}}

\newcommand{\Perp}{\!\perp \! \! \! \perp\!}

\newcommand{\nei}{\text{ne}}

\newtheorem{theorem}{Theorem}[section]
\newtheorem{lemma}[theorem]{Lemma}
\newtheorem{corollary}[theorem]{Corollary}
\newtheorem{proposition}[theorem]{Proposition}
\newtheorem{definition}[theorem]{Definition}

\theoremstyle{remark}

\newtheorem{example}[theorem]{Example}



\title{Structure recovery for partially observed discrete Markov random fields on graphs under not necessarily positive distributions}



%
%
%
%
%


\author{Florencia Leonardi and Rodrigo Carvalho}



\begin{document}

\maketitle

\begin{abstract}
We propose a penalized pseudo-likelihood criterion to estimate the graph of 
conditional dependencies in a discrete  Markov random field that can be 
partially observed.  We prove   the convergence of the estimator  in the case 
of a finite or countable infinite set of nodes. In the finite case the underlying 
graph can be recovered with probability one, while in the countable infinite case
 we can recover any finite sub-graph with probability one, by allowing the 
 candidate neighborhoods to grow as a function $o(\log n)$, with $n$ the sample size.   Our method 
requires minimal assumptions on the probability distribution and contrary to 
other approaches in the literature, the usual positivity condition is not needed. 
 We evaluate the performance of the estimator on simulated data and we apply 
 the methodology to a real dataset of stock index markets in different countries. 
\end{abstract}

\section{Introduction}

Discrete Markov random fields on graphs,  usually called graphical models in the statistical literature,  have received much attention from researchers in recent years, especially due to their flexibility to capture conditional dependence relationships between variables \cite{lauritzen1996, koller2009,lerasle_et_al2016, pensar2017,divino2000}.
 They have been applied to many different problems in different fields such as Biology \cite{shojaie2010},  Social Sciences \cite{strauss1990} or Neuroscience \cite{duarte_et_al2019}. Graphical models are in some sense ``finite'' versions of general random fields or Gibbs distributions, classical models in stochastic processes and statistical mechanics theory \cite{georgii2011}. 

In this work we focus on discrete Markov random field  models (with a finite or  countable infinite set of variables), where the set of random variables takes values on a finite alphabet. 
One of the main statistical questions for this type of models is how to recover  the underlying graph; that is, the graph determined by the conditional dependence relationships between the variables.
 For the class of Markov random fields on lattices some methods based on penalized pseudo-likelihood criteria 
like the Bayesian Information Criterion (BIC) of  \cite{schwarz1978} 
have appeared in the literature \cite{csiszar2006b}, see also \cite{tjelmeland1998, locherbach2011}. 
On the case of Markov random fields defined on general graphs the most studied model
is  the binary graphical model with pairwise interactions where structure estimation
 can be  addressed by using standard logistic regression techniques \cite{strauss1990, 
 ravikumar2010}, distance based approaches between conditional probabilities  
 \cite{galves2015, bresler_et_al2018} and maximization of the $\ell^1$-penalized pseudo-likelihood \cite{atchade2014, hoefling09a}, see also \cite{narayana2012}.  
 In the case of bigger discrete alphabets or general type of interactions, 
 to our knowledge the only work addressing the structure estimation problem is  \cite{loh2013}, where the authors obtain a characterization of the edges in the graph with the zeros in a generalized  inverse covariance matrix. Then, this characterization is used to derive estimators for restricted classes of models and  the authors prove the consistency in probability of these estimators.
 
Markov random fields have also been proposed for continuous random variables, where the structure estimation problem has been addressed by $\ell_1$ regularization for Gaussian Markov random fields \cite{meinshausen2006} and also extended to non-parametric 
 models \cite{lafferty2012,liu2012} and  general conditional distributions from the exponential family \cite{yang2015}. 
 
All these works, for discrete or continuous random variables, assume the model satisfies a usual ``positivity''  condition, that states that the probability distributions of finite sub-sets of variables are strictly positive. The positivity condition guarantees a factorization property of the joint distribution, thanks to a classical result known as Hammersley-Clifford theorem \cite{hc1971}. But the positivity condition seem to be extremely strong for discrete distributions, where in many applications some configurations are impossible to occur and then have zero probability. Moreover, the positivity condition does not enable to consider ``sparse'' models, where many parameters are  assumed to be zero. These models are specially appealing for  high dimensional data, where the number of variables is high and the number of relevant parameters in the model must be assumed relatively small in order to have efficient estimators. 

In this work we address the structure estimation problem for discrete Markov random fields without assuming the positivity condition. We first introduce a penalized 
pseudo-likelihood criterion to estimate the neighborhood of a node, that is later combined to obtain an estimator of the underlying graph. We prove that both estimators converge almost surely  to the true underlying graph in the case of a finite graphical model when the sample size grows, without imposing additional hypothesis on the model.  
 In the countable infinite case, that is when the underlying graph is infinite and the number of observed variables is allowed to grow with the sample size, we prove that the estimator restricted to a finite sub-graph also converges almost surely to the corresponding sub-graph. 

The paper is organized  as follows. Section~\ref{sec:definitions} presents
 the definition of the model,  including some examples. 
 Section~\ref{sec:estimation} introduces the different estimators of the 
 conditional dependence graph and presents the statements proofs of the main consistency results. 
 Finally, in Section~\ref{simul} we evaluate the estimators performance through simulations, 
 in  Section~\ref{aplic} we show a real data application  
 and in the Appendix we present the proofs of some auxiliary results.

\section{Discrete Markov random fields on graphs}
\label{sec:definitions}

A \textit{graph} is a pair $G=(V,E)$, where $V$ is the set of vertices (or nodes) and $E$ is the set of edges, $E \subset V \times V$. A graph $G$ is said \textit{simple} if for all $i \in V$, $(i,i)\not \in E$ and it is said \textit{undirected} if $(i,j) \in E$ implies $(j,i) \in E$  for every pair  $(i,j) \in V \times V$. 
Given any set $S$, the symbol $|S|$ denotes its cardinality.

Let $A$ be a finite set, a \textit{random field} on $A^V$ is a family of random variables indexed by the elements of $V$, $\{X_v : v \in V\}$, where each $X_v$ is a random variable with values in $A$. For $\Delta \subseteq V$, a subset of vertices, we write $X_\Delta = \{ X_i \colon i \in \Delta \}$, and $a_\Delta = \{ a_i \in A \colon i \in \Delta \}$ denotes a configuration on $\Delta$. The law (joint distribution) of the random variables $X_V$ is denoted by $\P$.

For any finite $\Delta\subset V$ we write 
\begin{equation}\label{joint_dist}
p (a_{\Delta}) = \P (X_{\Delta}=a_{\Delta}) \; \mbox{ with }  a_\Delta \in A^\Delta
\end{equation}
and if $p (a_{\Delta}) > 0$ we denote by  
\begin{equation}\label{cond_dist}
p(a_\Phi | a_\Delta) = \P( X_\Phi= a_\Phi | X_\Delta= a_\Delta) \; \mbox{ for }  a_\Phi \in A^{\Phi}, \,a_\Delta \in A^{\Delta}
\end{equation}
the corresponding conditional probability distributions. 

Given $v\in V$, a \textit{neighborhood} $W$ of $v$ is any finite set of vertices with  $v \notin W$. 
If there is a neighborhood $W$ of $v$ 
satisfying
\begin{equation}\label{propriedadeMarkov}
p(a_v | a_W) \; =\;  p(a_v|a_\Delta)
\end{equation}
for all finite $\Delta\supset W,\,  v \notin \Delta$ and all $a_v\in A, a_\Delta \in A^\Delta$ with 
$p(a_\Delta)>0$, then $W$ is called \emph{Markov neighborhood} of $v$. The definition of a 
Markov neighborhood $W$ is equivalent to request that for all $\Phi$ finite (not containing $v$) 
with $\Phi\cap W = \emptyset$,   $X_\Phi$  is  conditionally independent of $X_v$, 
given $X_W$. Formally, 
\begin{equation}\label{local_markov}
X_v \Perp X_\Phi | X_W \,, \, \mbox{ for all }\Phi \mbox{ with }\Phi \cap W=\emptyset,
\end{equation}
where $\Perp$ is the usual symbol denoting independence of random variables. This conditionally independence assumption defining the Markov neighborhoods corresponds to the property known as \emph{local Markov}
in finite graphical models,  that is weaker than the usually assumed \emph{global Markov} property,  see \cite{lauritzen1996} for details. 

A basic fact that we can derive from the definition is that if $W$ is a Markov 
neighborhood of $v\in V$,  then any  finite set $\Delta\supset W$ is  also a Markov neighborhood of $v$.  
On the  other hand,  if $W_1$ and $W_2$ are Markov neighborhoods of $v$ then it is not always true in general that $W_1\cap W_2$  is a Markov neighborhood, as shown in the following example. 

\begin{example}
Let  $V=\{1,2,3\}$ and consider the vector $(X_1,X_2,X_3)$ of Bernoulli random variables with
 $\P(X_i=0)=1/2$,   $\P(X_i=1)=1/2$, for $i=1,2,3$. Suppose that  $X_1=X_2=X_3$ 
 with probability 1. Then it is easy to check that both $\{2\}$ and $\{3\}$ are Markov
  neighborhoods of node $1$, but the intersection is not a Markov neighborhood
   (which will imply $X_1$ being independent of $X_2$ and $X_3$). 
\end{example}

This property satisfied by some probability measures takes us to define the following Markov intersection condition, formally stated here. \\

\noindent{\bf  Markov intersection property:} For all $v\in V$ and all $W_1$ and $W_2$ Markov neighborhoods of $v$, the set $W_1\cap W_2$ is also  a Markov neighborhood of $v$.\\

The Markov intersection condition  is desirable in this context to define the smallest Markov neighborhood of a node and to enable the  structure estimation problem to be well defined. 
This property is guaranteed under the following usual condition assumed in the literature:\\

\noindent{\bf  Positivity condition:} For all finite  $W\subset V$ and all $a_W\in A^W$ we have  $p(a_W)>0$.  \\

It can be seen that the positivity condition implies the Markov intersection property, 
see \cite{lauritzen1996} for details. 
For this reason, in the literature of Markov random fields it is generally assumed that the positivity condition holds. 
But  
there are distributions satisfying the Markov intersection property that are not strictly positive. An example of this is a  typical realization of a Markov chain with some zeros in the transition matrix.
The following basic example shows that positivity condition and Markov intersection property are not equivalent in general. \\

 \begin{example}\label{ex-markov-chain}
Let  $V=\Z$ and take a stationary Markov chain of order one assuming values in $A=\{0,1\}$ with transition matrix 
\[
P = \begin{pmatrix}
1/2 & 1/2 \\
1 & 0
\end{pmatrix}
\]
Evidently, the distribution $\P$ of the Markov chain does not satisfy the positivity condition 
(any configuration with two subsequent one's has zero probability).  But the distribution satisfies the Markov intersection property, because any Markov neighborhood of  node $i$ necessarily contains nodes $i-1$ and $i+1$, that corresponds to the \emph{minimal} Markov neighborhood of node $i$. \\
\end{example}

From now on we assume the distribution $\P$ satisfies  the Markov intersection property  defined before. 
For  $v \in V$, let $\Theta(v)$ be the set of all subsets of $V$ that are Markov neighborhoods of $v$.
The \emph{basic neighborhood} of $v$  is defined as
\begin{equation}\label{ne(v)}
\nei(v) = \bigcap_{W \in \Theta(v)} W .
\end{equation}
By the  Markov intersection property, $\nei(v)$ is the smallest Markov neighborhood of $v \in V$.
Based on these special neighborhoods, define the graph $G=(V,E)$
 by 
\begin{equation}\label{graph}
(v,w) \in E \; \mbox{if and only if } w \in \nei(v). 
\end{equation}

We can state the following  basic result. 

\begin{lemma}\label{undirect}
The graph $G$, defined by \eqref{graph}, is undirected, i.e. if $(v,w) \in E \Rightarrow (w,v) \in E$.
\end{lemma}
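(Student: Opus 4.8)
The plan is to prove the contrapositive formulation: by \eqref{graph} the claim $(v,w)\in E\Rightarrow(w,v)\in E$ is exactly $w\in\nei(v)\Rightarrow v\in\nei(w)$, which is equivalent to showing that $v\notin\nei(w)$ implies $w\notin\nei(v)$ (the case $v=w$ being vacuous, since $v\notin\nei(v)$ always). So I would start from the assumption $v\notin\nei(w)$ and try to \emph{exhibit} a Markov neighborhood of $v$ that does not contain $w$; once such a set is found, the definition \eqref{ne(v)} of $\nei(v)$ as the intersection of all Markov neighborhoods of $v$ immediately forces $w\notin\nei(v)$. The natural candidate is
\[
W \;=\; (\nei(v)\cup\nei(w))\setminus\{w\}.
\]
Since $v\notin\nei(v)$, $v\notin\nei(w)$ (the assumption) and $v\neq w$, the set $W$ contains neither $v$ nor $w$; it is finite, so it is a legitimate candidate neighborhood of both $v$ and $w$.

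The core of the argument is to verify that $W$ is a Markov neighborhood of $v$, which I would do through the equivalent conditional-independence characterization \eqref{local_markov}: I must check $X_v\Perp X_\Phi\mid X_W$ for every finite $\Phi$ with $v\notin\Phi$ and $\Phi\cap W=\emptyset$. Fix such a $\Phi$ and put $\Phi'=\Phi\setminus\{w\}$, so that $\Phi'$ is finite, does not contain $v$, and is disjoint from $W\cup\{w\}$. Two independence statements then feed the conclusion. First, $W\cup\{w\}=\nei(v)\cup\nei(w)\cup\{w\}$ is a finite superset of the Markov neighborhood $\nei(v)$ that does not contain $v$, hence is itself a Markov neighborhood of $v$ by the basic monotonicity fact recorded after \eqref{propriedadeMarkov}; this gives $X_v\Perp X_{\Phi'}\mid X_{W\cup\{w\}}$. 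Second, $W$ is a finite superset of $\nei(w)$ not containing $w$, hence a Markov neighborhood of $w$ — here I use the Markov intersection property, which is precisely what guarantees that $\nei(w)\in\Theta(w)$ — and applying its defining conditional independence to the singleton $\{v\}$ (valid because $v\neq w$ and $v\notin W$), together with symmetry of conditional independence, yields $X_v\Perp X_w\mid X_W$.

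To finish I would combine these two statements using the \emph{contraction} property of conditional independence: if $X\Perp Y\mid Z$ and $X\Perp T\mid(Y,Z)$ then $X\Perp(Y,T)\mid Z$. Crucially, contraction is one of the semi-graphoid axioms and therefore holds for an arbitrary probability distribution; it does not require positivity, unlike the intersection axiom, which we must scrupulously avoid. Applying it with $X=X_v$, $Y=X_w$, $Z=X_W$ and $T=X_{\Phi'}$ gives $X_v\Perp X_{\Phi'\cup\{w\}}\mid X_W$, and then decomposition gives $X_v\Perp X_\Phi\mid X_W$ because $\Phi\subseteq\Phi'\cup\{w\}$. This establishes $W\in\Theta(v)$, so $\nei(v)\subseteq W$ and hence $w\notin\nei(v)$, completing the proof.

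I do not expect a genuine obstacle here; the work is essentially bookkeeping. The points requiring care are: checking that each auxiliary set ($W$, $W\cup\{w\}$, $\Phi'$, the singleton $\{v\}$) meets all the side conditions — finiteness, not containing the relevant vertex, and the disjointness required to invoke the definition of Markov neighborhood — and making sure that the only conditional-independence manipulations used are the semi-graphoid ones (symmetry, decomposition, contraction), since invoking the intersection axiom would silently smuggle in the positivity assumption that the paper is committed to dropping.
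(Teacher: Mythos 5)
Your proof is correct, and it reaches the conclusion by a genuinely different technical route than the paper's. The paper also argues a contrapositive (with the roles of $v$ and $w$ swapped): assuming $w\notin\nei(v)$, it writes $p(a_v,a_w\mid a_{\Delta\setminus\{w\}})$ in two ways for any $\Delta\supseteq\nei(v)$ with $v\notin\Delta$, cancels, and obtains $p(a_w\mid a_{\Delta\setminus\{w\}},a_v)=p(a_w\mid a_{\Delta\setminus\{w\}})$ --- essentially a hands-on proof of the symmetry of conditional independence at the level of conditional probabilities --- and then the step from ``the conditionals agree'' to ``this set is a Markov neighborhood of $w$ not containing $v$'' is delegated to Lemma~\ref{vizinhanca_contida1}. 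You instead exhibit the explicit candidate $(\nei(v)\cup\nei(w))\setminus\{w\}$ and verify the local Markov property \eqref{local_markov} for an arbitrary finite $\Phi$ by combining two monotonicity facts with symmetry, contraction and decomposition. Your route buys modularity and transparency about assumptions: every manipulation is a named semi-graphoid axiom valid for arbitrary discrete distributions, and you are explicit that the intersection axiom (the one that would smuggle in positivity) is never used. The cost is that you import contraction, which the paper never states and which a referee would ask you to verify holds without positivity (it does); the paper's route is more self-contained, using only bare conditional-probability algebra plus its own auxiliary lemma, but is terser about the final step that actually produces a Markov neighborhood excluding the relevant vertex. All side conditions in your argument (finiteness, $v,w\notin W$, disjointness of $\Phi'$ from $W\cup\{w\}$) check out, so there is no gap.
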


The proof of Lemma~\ref{undirect} can be found in the Appendix. \\

As an illustration,  we show in Figure~\ref{graphs}  different Markov random field models 
with finite as well as infinite undirected graphs. 
Besides the graphs with infinite set of nodes in this example are regular in the sense that the neighborhoods 
have the same structure for each node, in our setting we allow for different neighborhood structures for different nodes, 
not imposing a model over a regular lattice as the approach considered in  \cite{csiszar2006b}. As an example, we present a joint distribution on five nodes 
having the graph in Figure~\ref{graphs}(c) as the graph of conditional dependencies between nodes.  

\begin{figure}[t!]
\begin{center}
\begin{minipage}{6cm}
\begin{center}
\includegraphics*{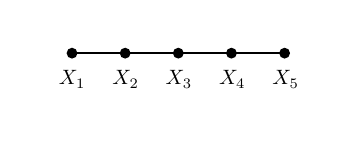}
(a)\\[14mm]
\includegraphics*[scale=0.9]{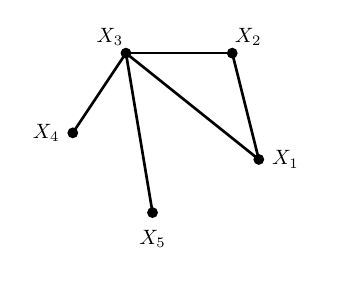}\\
(c)
\end{center}
\end{minipage}
\begin{minipage}{6cm}
\begin{center}
\includegraphics*{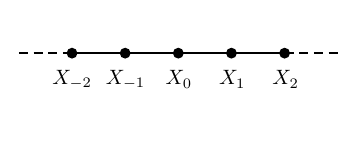}\\
(b)\\[5mm]
\includegraphics*[scale=0.7]{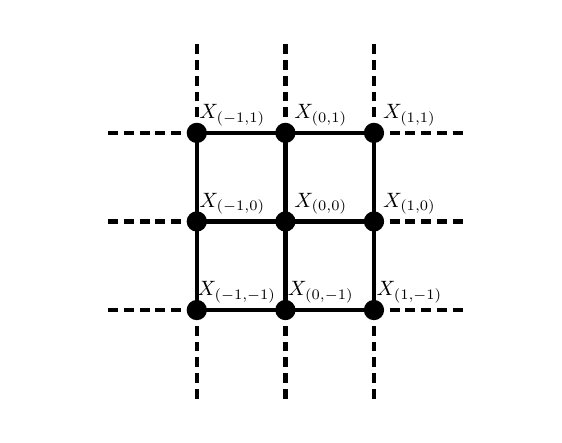}\\[3mm]
(d)\\[3mm]
\end{center}
\end{minipage}
\end{center}
\caption{Different graph structures of Markov random fields, with finite (left) and countable infinite (right) set of variables.
Examples (a) and (b) are obtained in particular 
by the distribution in Example~\ref{ex-markov-chain}. Case (a) is the projection on $\{0,1\}^{\{1,\dotsc,5\}}$
and case (b) is a representation of the joint distribution on $\{0,1\}^{\mathbb Z}$. 
Figure~\ref{graphs}(c) is a finite graphical model defined on a general graph that we will 
further use  in the simulations in Section~\ref{simul}, see Example~\ref{ex-graph} below  and Figure~\ref{graphs}(d) represents 
the interaction graph in a classical Ising model, see for example \citep{csiszar2006b,georgii2011}. }
\label{graphs}
\end{figure}

\begin{example}\label{ex-graph}
Consider the alphabet $A=\{0,1,2\}$ and define the joint probability distribution of the vector $(X_1,X_2,\dots, X_5)$
using the factorization 
\begin{equation}\label{joint-dist-c}
p(x_1,x_2,x_3,x_4,x_5) \;= \;  p(x_3)p(x_2|x_1,x_3)p(x_1|x_3)p(x_4|x_3)p(x_5|x_3)\,
\end{equation}
with the conditional distributions given in Table~\ref{table-probs}. This distribution does not satisfy the positivity condition,
 but the basic neighborhoods are well defined for each node and its graph of conditional dependencies is given by Figure~\ref{graphs}(c). We consider this particular distribution later in the 
simulations in Section~\ref{simul}. 
\end{example}

\begin{table}[t!]
\begin{center}
\begin{minipage}{7cm}
\begin{center}
\begin{tabular}{c|ccc}
$x_3$ & 0 & 1 & 2 \\ 
\hline 
$p(x_3)$ & $0.3$ & $0.2$ & $0.5$ \\ 
\end{tabular}

\vspace*{6mm}

\begin{tabular}{c|ccc}
$x_2$ & 0 & 1 & 2 \\ 
\hline 
$p(x_2|x_1=0,x_3=0)$ & $0.5$ & $0.5$ & $0$ \\ 
$p(x_2|x_1=1,x_3=0)$ & $0.5$ & $0.25$ & $0.25$ \\ 
$p(x_2|x_1=2,x_3=0)$ & $0.25$ & $0.25$ & $0.5$ \\ 
$p(x_2|x_1=0,x_3=1)$ & $0.3$ & $0$ & $0.7$ \\ 
$p(x_2|x_1=1,x_3=1)$ & $0.25$ & $0.25$ & $0.5$ \\ 
$p(x_2|x_1=2,x_3=1)$ & $0.3$ & $0.7$ & $0$ \\
$p(x_2|x_1=0,x_3=2)$ & $0$ & $0.75$ & $0.25$ \\ 
$p(x_2|x_1=1,x_3=2)$ & $0.3$ & $0.3$ & $0.4$ \\ 
$p(x_2|x_1=2,x_3=2)$ & $0.4$ & $0.3$ & $0.3$ \\
\end{tabular}
\end{center}
\end{minipage}
\begin{minipage}{7cm}
\begin{center}
\begin{tabular}{c|ccc}
$x_1$ & 0 & 1 & 2 \\ 
\hline 
$p(x_1|x_3=0)$ & $0.2$ & $0.4$ & $0.4$ \\ 
$p(x_1|x_3=1)$ & $0.3$ & $0.4$ & $0.3$ \\ 
$p(x_1|x_3=2)$ & $0.4$ & $0.3$ & $0.3$ \\ 
\end{tabular}
\vspace*{4mm}

\begin{tabular}{c|ccc}
$x_4$ & 0 & 1 & 2 \\ 
\hline 
$p(x_4|x_3=0)$ & $0.1$ & $0.4$ & $0.5$ \\ 
$p(x_4|x_3=1)$ & $0.2$ & $0.7$ & $0.1$ \\ 
$p(x_4|x_3=2)$ & $0.3$ & $0.6$ & $0.1$ \\ 
\end{tabular}
\vspace*{4mm}

\begin{tabular}{c|ccc}
$x_5$ & 0 & 1 & 2 \\ 
\hline 
$p(x_5|x_3=0)$ & $0.2$ & $0.6$ & $0.2$ \\ 
$p(x_5|x_3=1)$ & $0.3$ & $0.1$ & $0.6$ \\ 
$p(x_5|x_3=2)$ & $0.4$ & $0.3$ & $0.3$ \\ 
\end{tabular}
\end{center}
\end{minipage}
\end{center}
\caption{Conditional probabilities used to define a joint distribution on $\{0,1,2\}^5$ by the factorization 
$p(x_1,x_2,x_3,x_4,x_5)= p(x_3) p(x_2|x_1,x_3)p(x_1|x_3)p(x_4|x_3)p(x_5|x_3)$. As some conditional probabilities
are 0 the joint distribution does not satisfy the positivity condition. The graph of conditional dependencies of the vector 
$(X_1,X_2,\dots,X_5)$ is given by Figure~\ref{graphs}(c).}
\label{table-probs}
\end{table}

\section{Estimation and model selection}\label{sec:estimation}

Suppose we (partially) observe an independent sample with size $n$ of the random field $\{X_v\colon v\in V\}$ with distribution $\P$. 
Let $\{V_n\}_{n\in\N}$ be a sequence of finite subsets of $V$ and denote by $x^{(i)}_{v}$ the value obtained at the vertex $v$ 
on the $i$-th observation of the sample. When $V$ is finite we assume, without loss  of generality, that $V_n=V$ for all $n\in\N$.  
If $V$ is infinite, we assume 
 $V_n\nearrow V$ when $n\to\infty$; then, as $\nei(v)$ is assumed to be  finite,  $V_n$ will eventually contain the set $\{v\}\cup\{\nei(v)\}$.
The structure estimation problem  consists on determining the set of neighbors $\nei(v)$ for some target nodes $v$ belonging to a finite set, 
based on the partial  sample $\{x_v^{(1:n)}\colon v\in V_n\}$. Recovering the neighbors of a set of nodes enables us to recover the induced sub-graph of $G$ over this set, as we  prove in Corollary~\ref{cor:main2}. 

Given a vertex $v \in V_n$ and a set $W\subset V_n$ not containing $v$, the operator $N(a_v,a_W)$ will denote the number of occurrences of the event
\[
\{X_v=a_v\}\cap \{X_W=a_W\}
\]
in the sample. That is
\[
N(a_v,a_W) \;=\; \sum_{i=1}^n\mathbf{1}\{x_v^{(i)}=a_v\,,\, x_W^{(i)}=a_W\}\,.
\]
The conditional likelihood function of $X_v$ given $X_W=a_W$, for a set of parameters $\{q_a\colon a\in A\}$, is then 
\begin{equation}\label{cond_like}
L( (q_a)_{a\in A} | x_W^{(1:n)}) =  \prod_{a\in A}  q_{a}^{N(a,a_W)} 
\end{equation}
and it is not hard to prove that the distribution over $A$ maximizing this 
function is given by
\begin{equation}\label{trans}
q_a = \hat p(a_v|a_W) = \frac{N(a_v,a_W)}{N(a_W)},\qquad a\in A\,,
\end{equation}
for  all $a_W$ with $N(a_W)>0$, where $N(a_W) = \sum_{a_v\in A} N(a_v,a_W)$.
By multiplying all the maximum likelihoods of the conditional distribution of $X_v$ given $X_W=a_W$ for the different $a_W\in A^W$ we can compute a maximal 
pseudo-likelihood function for vertex $v\in V$, given  by   
\begin{equation}\label{hatp}
\hat\P( x_v^{(1:n)} | x_{W}^{(1:n)}) = \prod_{a_W\in A^{W}}\prod_{a_v\in A} \hat p(a_v|a_W)^{N(a,a_W)} \,,
\end{equation}
where the product is over all $a_W\in A^W$ with $N(a_W)>0$ and all $a_v\in A$ with $\hat p(a_v|a_W)>0$.  

Before presenting the main definitions and results of this section we state a proposition 
of independent interest, that shows a non asymptotic upper bound for 
the rate of convergence of $\hat{p}(a_v |a_W)$ to $p(a_v |a_W)$. This proposition
is  related to  a result obtained in \cite{garivier2011} for the estimation of the context tree of a stationary and ergodic process and its proof is 
given in the appendix.

\begin{proposition}\label{prop:bound}
For all $\delta>0$, $n\geq \exp(\delta^{-1})$, $v\in V_n$, $W \subset V_n\setminus\{v\}$ and all $a_W \in A^{W}$  we have 
\[
\P\Bigl( N(a_W) \,\sup_{a_v\in A} | \hat{p}(a_v |a_W) -  p(a_v |a_W)|^2 \;>\; \delta \log n \Bigr)\;\leq\; \frac{2|A|\delta\log^2 n}{n^\delta}\,.
\]
\end{proposition}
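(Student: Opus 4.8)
The plan is to condition on the count $N(a_W)$ and reduce the statement to a concentration bound for an empirical multinomial (equivalently, for each fixed $a_v\in A$, a Binomial proportion) around its mean. Observe that conditionally on the event $\{N(a_W)=m\}$, and given which of the $n$ observations realize $X_W=a_W$, the corresponding values $X_v$ are i.i.d.\ draws from the true conditional law $p(\cdot\,|\,a_W)$; hence $\hat p(a_v\,|\,a_W)$ is the empirical mean of $m$ i.i.d.\ Bernoulli$(p(a_v\,|\,a_W))$ variables. So for fixed $a_v$ and fixed $m\geq 1$, Hoeffding's inequality gives
\[
\P\bigl(\,|\hat p(a_v\,|\,a_W)-p(a_v\,|\,a_W)|^2 > t \;\big|\; N(a_W)=m\,\bigr)\;\leq\; 2\exp(-2mt).
\]
Taking $t=\delta\log n/m$ makes the right-hand side $2\exp(-2\delta\log n)=2 n^{-2\delta}$, which is independent of $m$; summing over the (at most) $|A|$ values of $a_v$ via a union bound turns the supremum over $a_v$ into a factor $|A|$, and then averaging over $m=1,\dots,n$ (the event is vacuous when $N(a_W)=0$, since then there is no contribution and $\sup_{a_v}|\hat p - p|$ is not even defined/contributes zero to $N(a_W)\sup|\cdot|^2$) contributes at most a further factor $n$. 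This yields a bound of order $2|A| n\cdot n^{-2\delta}=2|A| n^{1-2\delta}$, which must be sharpened to match the stated $2|A|\delta\log^2 n\, n^{-\delta}$.

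To get the sharper constant one should not bound $\exp(-2\delta\log n)$ crudely but keep one copy of the exponential to absorb the sum over $m$. Concretely, write $t=\delta\log n/m$ and use Hoeffding in the form $2\exp(-2 m t)=2\exp(-2\delta\log n)$; the point is that $\delta\log n \geq 1$ is guaranteed by the hypothesis $n\geq\exp(\delta^{-1})$, so one can split $2\delta\log n = \delta\log n + \delta\log n$ and write $n^{-2\delta}= n^{-\delta}\cdot n^{-\delta}$. Summing the bound over $m=1,\dots,n$ and over $a_v\in A$ gives at most $|A|\, n\cdot 2 n^{-\delta}\cdot n^{-\delta}$; we then need $n\cdot n^{-\delta}\leq \delta\log^2 n$, i.e.\ $n^{1-\delta}\leq\delta\log^2 n$. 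This last inequality is where the condition $n\geq e^{1/\delta}$ is used: I would verify it by a monotonicity/calculus argument — for $n\geq e^{1/\delta}$ one has $\delta\log n\geq 1$, hence $n^{1-\delta}=n/n^\delta\leq n/e^{\log n/(\text{something})}$; more carefully, set $u=\log n\geq 1/\delta$ and check $e^{(1-\delta)u}\leq \delta u^2$, which holds for all $u\geq 1/\delta$ because at $u=1/\delta$ the left side is $e^{1/\delta-1}$ while the right side is $1/\delta$, and the ratio is decreasing for $\delta\le$ a suitable constant; the general case follows by comparing derivatives. (If the clean inequality fails for some range of $\delta$, one instead keeps a slightly different split of the exponent, which is why the hypothesis is stated as $n\geq\exp(\delta^{-1})$ rather than something weaker.)

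The main obstacle is therefore not the probabilistic core — that is just Hoeffding plus a union bound — but the bookkeeping that converts the naive $2|A|\,n^{1-2\delta}$ into the advertised $2|A|\delta\log^2 n\cdot n^{-\delta}$ using precisely the assumption $n\geq\exp(\delta^{-1})$; one must be careful that the sum over $m$ really contributes only a polynomial-in-$\log n$ factor once one exponential is retained, and that the $\sup_{a_v}$ costs only a union-bound factor $|A|$ and not more. A secondary point to handle cleanly is the conditioning argument: one should state precisely that, conditionally on the random set $I=\{i:\ x_W^{(i)}=a_W\}$, the family $\{x_v^{(i)}:i\in I\}$ is i.i.d.\ $p(\cdot\,|\,a_W)$ (using independence across the $n$ sample replicates and that $a_W$ has positive probability, so $p(\cdot\,|\,a_W)$ is well defined), then take expectation over $I$; the bound being uniform in $|I|=m$ is exactly what makes this averaging trivial.
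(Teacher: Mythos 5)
Your probabilistic core is sound and genuinely different from the paper's. The paper does not condition on $N(a_W)$ at all: it adapts a martingale--peeling argument from Garivier and Leonardi (the exponential supermartingale $W_n^\lambda = 2^{\lambda O_n - N_n\phi(\lambda)}$, with the range of $N_n$ sliced into $m \approx \delta\log^2 n$ geometric blocks, each contributing $2^{-\alpha/(1+\eta)}$ via Markov's inequality), which is where the $\delta\log^2 n$ factor in the stated bound comes from. That machinery is designed for dependent observations along a single trajectory; in the i.i.d.-replicates setting of this paper your route --- condition on $I=\{i: x_W^{(i)}=a_W\}$, note that $(x_v^{(i)})_{i\in I}$ is conditionally i.i.d.\ with law $p(\cdot\mid a_W)$, and apply Hoeffding with $t=\delta\log n/m$ --- is perfectly legitimate and in fact simpler.

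The genuine gap is in your final bookkeeping, and it is not fixable the way you propose. Since the conditional bound $2|A|n^{-2\delta}$ is uniform in $m\ge 1$ and the events $\{N(a_W)=m\}$ partition the sample space, the correct step is
$\P(E)=\sum_{m}\P(E\mid N(a_W)=m)\,\P(N(a_W)=m)\le 2|A|n^{-2\delta}$,
with no factor of $n$: you average against $\P(N(a_W)=m)$, you do not take a union bound over $m$. Your version instead multiplies by $n$ and then tries to absorb it via $n^{1-\delta}\le \delta\log^2 n$; that inequality is simply false for $\delta<1$ (take $\delta=0.1$ and $n=e^{10}$: the left side is $e^{9}$, the right side is $10$), and $\delta<1$ is exactly the regime needed in Theorem~\ref{main}, where $\delta$ is taken arbitrarily small to handle ``any $c>0$''. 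No ``different split of the exponent'' rescues this. Once the spurious factor $n$ is removed you obtain $2|A|n^{-2\delta}$, which is actually sharper than the stated bound and implies it, since $n\ge \exp(\delta^{-1})$ (hence $n\ge 2$) guarantees $\delta\log^2 n\cdot n^{\delta}\ge 1$. You should also state explicitly that the case $p(a_W)=0$ is trivial ($N(a_W)=0$ a.s.) and that the $m=0$ term contributes nothing, which you do gesture at.
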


We are now ready to introduce the following neighborhood estimator  for the set $\nei(v)$, for $v\in V$.

\begin{definition}\label{def:est_bounded}
Given a partial sample $\{x_v^{(1:n)}\colon v\in V_n\}$ and a constant $c>0$, the empirical neighborhood of $v\in V_n$ is the set of vertices 
$\widehat \nei(v)$ defined by
\begin{equation}\label{argmax}
\widehat \nei(v) \;=\; \underset{W\subset V_n\setminus\{v\}}{\arg\max}\bigl\{ \,\log \hat\P(x_v^{(1:n)}| x_{W}^{(1:n)})  - c\,|A|^{|W|}\log n \,\bigr\}\,.
\end{equation}
\end{definition}

In order to state our main results, we recall the definition of  the  K\"ullback-Leibler divergence between two probability distributions $p$ 
and $q$ over $A$. It is given by
 \begin{equation}\label{KL}
 D(p;q) = \sum_{a\in A} p(a)\log\frac{p(a)}{q(a)}
 \end{equation}
where, by convention, $p(a) \log\frac{p(a)}{q(a)}=0$ if $p(a)=0$ and $p(a) \log\frac{p(a)}{q(a)}=+\infty$ if $p(a) > q(a)=0$. An important property of the K\"ullback-Leibler divergence is that  $D(p;q) =0$ if and only if $p(a)=q(a)$ for all $a\in A$.

For any $v\in V$ denote by  
\[
p_n(v) = \min_{\substack{a_v,a_{W}\\W\subset V_n\setminus\{v\}}}
 \{\;p(a_v|a_{W})\colon p(a_v|a_{W})>0\;\}
\]
and 
\[
\alpha_n(v)  =  \min_{\substack{W\subset V_n\setminus\{v\}\\
\nei(v)\not\subset W}}\;\Bigl\{  \sum_{a_{W\cup\nei(v)}} \;  p(a_{W\cup \nei(v)}) D ( p(\cdot_v|a_{\nei(v)})\,;\, p(\cdot_v|a_{W})) \Bigr\}
\]
where $p(\cdot_v|a_{\nei(v)})$ denotes the probability distribution over $A$ given by  $\{p(a_v|a_{\nei(v)}\}_{a_v\in A}$ and similarly for $p(\cdot_v|a_W)$. 
We note that for  any vertex $v\in V$ and any $n\in\mathbb N$ we must have $p_n(v)>0$ and  also by the definition of the basic neigborhood and Lemma~\ref{vizinhanca_contida1} we must have  $\alpha_n(v)>0$. For simplicity in the proofs and as is usual in the literature, we will assume that this quantities are uniformly bounded from below by positive constants, that is we assume that
\[
p_*\;=\; \inf_{v}\inf_{n} \{\; p_n(v)\;\} \quad \text{ and }  \quad \alpha_* \;=\;\inf_v \inf_{n} \{\; p_n(v)\;\}
\]
are positive constants. Observe that $p_*>0$ is not equivalent to the positivity condition, where \emph{all} the conditional probabilities are assumed to be positive. Here we are assuming that those positive probabilities in the model are bounded from below, but some of them can be zero. 
%

We can now state the following consistency result for the neighborhood estimator given in \eqref{argmax}.

\begin{theorem}\label{main}
Let $v\in V$.  Assume $V_n\nearrow V$ with $|V_n| = o(\log n)$. Assume also that $p_*>0$ and
$\alpha_*>0$.  Then for any $c>0$, the estimator given by \eqref{argmax}  satisfies 
$\widehat\nei(v) = \nei(v)$ with probability converging to 1 as $n\to \infty$. Moreover, if $c > 
 |A|^2[p_*(|A|-1)]^{-1}$ then $\widehat\nei(v) = \nei(v)$ eventually almost surely as $n\to \infty$.
\end{theorem}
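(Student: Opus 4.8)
The plan is to split the analysis into two directions: ruling out underestimation (candidate neighborhoods $W$ that miss some element of $\nei(v)$, or more generally $W$ with $\nei(v)\not\subseteq W$) and ruling out overestimation ($W\supsetneq \nei(v)$). The criterion in \eqref{argmax} is a penalized log-pseudo-likelihood, so the standard approach is to compare, for each candidate $W\neq\nei(v)$, the value of the objective at $W$ with its value at the true set $\nei(v)$, and show the difference has the right sign with high probability (and eventually almost surely under the stated lower bound on $c$). Throughout I would use that $|V_n|=o(\log n)$, so that the number of subsets $W\subset V_n\setminus\{v\}$ is $2^{|V_n|-1}=2^{o(\log n)}=n^{o(1)}$, which is what makes a union bound over all candidates affordable against the polynomial-in-$n$ tail estimates.

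For underestimation, fix $W$ with $\nei(v)\not\subseteq W$. Writing the log-pseudo-likelihood in terms of empirical conditional entropies, the gap $\log\hat\P(x_v^{(1:n)}\mid x_{\nei(v)}^{(1:n)})-\log\hat\P(x_v^{(1:n)}\mid x_W^{(1:n)})$ is, up to lower-order fluctuations, $n$ times a sum of Kullback–Leibler divergences of the form appearing in the definition of $\alpha_{\min}(v)$; since $W$ is not a Markov neighborhood of $v$, Lemma~\ref{vizinhanca_contida1} guarantees this limiting quantity is at least $\alpha_{\min}(v)>0$. More precisely I would use the law of large numbers (Borel–Cantelli with the exponential concentration underlying Proposition~\ref{prop:bound}, or a direct Hoeffding-type bound on the relevant counts) to show that this gap exceeds, say, $\tfrac12 n\,\alpha_{\min}(v)$ eventually almost surely, while the penalty difference $c(|A|^{|W|}-|A|^{|\nei(v)|})\log n$ is $O(\log n\cdot |A|^{|V_n|})=o(n)$ under $|V_n|=o(\log n)$. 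Hence the objective at $\nei(v)$ strictly exceeds the objective at any such $W$, eventually almost surely, with no constraint on $c$.

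For overestimation, fix $W\supsetneq \nei(v)$. Now the population conditional distributions agree, $p(a_v\mid a_W)=p(a_v\mid a_{\nei(v)})$, so there is no first-order separation; the whole comparison is at the scale $\log n$, which is exactly where the penalty lives. The clean way is to estimate, for each extra configuration, the fluctuation of the empirical log-likelihood around its population value: a second-order Taylor expansion of $\hat p\log\hat p$ around $p$ gives a leading term controlled by $N(a_W)\sum_{a_v}|\hat p(a_v\mid a_W)-p(a_v\mid a_W)|^2/p(a_v\mid a_W)$, and Proposition~\ref{prop:bound} bounds $N(a_W)\sup_{a_v}|\hat p-p|^2$ by $\delta\log n$ with probability at least $1-2|A|\delta\log^2 n/n^{\delta}$. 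Summing over the at most $|A|^{|W|}$ configurations $a_W$ and dividing by $p_{\min}(v)$ shows the log-pseudo-likelihood gap between $W$ and $\nei(v)$ is, with overwhelming probability, at most $\bigl(|A|^{|W|}/p_{\min}(v)\bigr)\,\delta\log n$ (plus negligible terms), while the penalty saves $c\,(|A|^{|W|}-|A|^{|\nei(v)|})\log n\geq c\,|A|^{|\nei(v)|}(|A|-1)\,|A|^{|W|-|\nei(v)|}\log n$. Choosing $\delta$ just below the ratio of these constants, the condition $c>|A|^2[p_{\min}(v)(|A|-1)]^{-1}$ makes the penalty win; a union bound over the $n^{o(1)}$ candidates $W$, together with $\sum_n n^{o(1)}\log^2 n/n^{\delta}<\infty$, and Borel–Cantelli give the eventual-almost-sure conclusion. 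For the convergence-in-probability statement valid for every $c>0$, one does not even need the sharp constant: it suffices that the overestimation gap is $o_{\P}(\log n)$ relative to a fixed positive penalty, which follows directly from Proposition~\ref{prop:bound} with $\delta\to 0$.

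The main obstacle is the overestimation step: unlike underestimation, there is no population-level signal, so one must control the second-order fluctuation term precisely enough that the comparison of two explicit constants ($|A|^{|W|}/p_{\min}(v)$ versus $c\,|A|^{|\nei(v)|}(|A|-1)$) comes out correctly, and in particular one must justify that the higher-order remainder in the $\hat p\log\hat p$ expansion and the contribution of configurations $a_W$ with small but positive $N(a_W)$ are genuinely negligible — this is where Proposition~\ref{prop:bound} is doing the real work, and where the precise form of the threshold on $c$ comes from. The bookkeeping with configurations $a_W$ such that $N(a_W)=0$ (which contribute nothing to the pseudo-likelihood but still count toward the penalty, which only helps) and the reduction of $\sum_{a_W}$ for $W$ to a sum over $\nei(v)$-configurations also need a careful but routine argument.
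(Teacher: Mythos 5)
Your overestimation step (candidates $W\supsetneq\nei(v)$) is essentially the paper's argument: the log-pseudo-likelihood gap is written exactly as $\sum_{a_W}N(a_W)\,D(\hat p(\cdot_v|a_W);p(\cdot_v|a_W))$ after using the MLE property and $p(a_v|a_W)=p(a_v|a_{\nei(v)})$, then bounded via Lemma~\ref{KLTV} and Proposition~\ref{prop:bound} (in its uniform form, Proposition~\ref{prop:bound2}) with a union bound over the $n^{o(1)}$ candidates, and the comparison $\delta<c(|A|-1)|A|^{-2}p_{\min}(v)$ yields exactly the stated threshold on $c$ (with $\delta>1$ needed for Borel--Cantelli). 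No Taylor remainder needs to be controlled, since Lemma~\ref{KLTV} is an exact inequality, so your worry about higher-order terms dissolves.

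The genuine gap is in your underestimation step. You lump together all $W$ with $\nei(v)\not\subseteq W$ and assert that the population log-likelihood gap is at least $\alpha_{\min}(v)$ ``by Lemma~\ref{vizinhanca_contida1}''. But $\alpha_{\min}(v)$ is defined as a minimum only over $W\subsetneq\nei(v)$. For a candidate $W$ that neither contains nor is contained in $\nei(v)$, the limiting quantity is $H(X_v|X_W)-H(X_v|X_{\nei(v)})=I(X_v;X_{\nei(v)\setminus W}\,|\,X_W)$, which is not one of the terms in that minimum; moreover the family of such $W$ grows with $V_n$, so you cannot reduce to finitely many candidates and invoke a plain law of large numbers --- you would need a positive lower bound uniform over a class whose members contain arbitrarily many nodes outside $\nei(v)$, and conditioning on additional outside nodes can only shrink this conditional mutual information. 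The paper devotes a separate case to exactly this (its class $\mathcal{B}_3$): it compares $\pml(x_v^{(1:n)}|x_W^{(1:n)})$ with $\pml(x_v^{(1:n)}|x_{V_n\setminus\{v\}}^{(1:n)})$ rather than directly with $\nei(v)$, uses the MLE property to replace $\hat p(a_v|a_{V_n\setminus\{v\}})$ by $p(a_v|a_{\nei(v)})$, and introduces $W'=W\cap\nei(v)$ so that the signal term becomes $\sum_{a_{\nei(v)}}p(a_{\nei(v)})\,D(p(\cdot_v|a_{\nei(v)});p(\cdot_v|a_{W'}))\geq\alpha_{\min}(v)$, with $W'$ now ranging over the finitely many proper subsets of $\nei(v)$; the remaining fluctuation term is again handled by Proposition~\ref{prop:bound2}. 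As written, your argument covers only the paper's case $W\subsetneq\nei(v)$; the incomparable candidates require this extra reduction (or some substitute for it), and without it the claimed uniform lower bound $\tfrac12 n\,\alpha_{\min}(v)$ is unjustified.
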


Once we have guaranties that we can consistently estimate the neighborhood of a node $v\in V$, we can consider the estimation of a finite subgraph of $G$. To do this, we  can simply  estimate the neighborhood of each node and reconstruct the subgraph based on the set of estimated  neighborhoods. Given a set $V'\subset V$, we denote by $G_{V'}$ the induced subgraph; that is, the graph given by the pair $(V', E')$, where $E' = \{(v,w)\in E\colon v,w\in V'\}$.
Based on the neighborhood estimator (\ref{argmax}), we can construct an estimator of the subgraph $G_{V'}$  by defining the set of edges
\begin{equation}\label{conserv-graph}
\widehat E'_\wedge = \{(v,w)\in V'\times V'\colon v\in \widehat\nei(w) \text{ and }w\in \widehat\nei(v)\}
\end{equation}
were this is refereed as the \emph{conservative approach} or we can take 
\begin{equation}\label{nconserv-graph}
\widehat E'_\vee = \{(v,w)\in V'\times V'\colon v\in \widehat\nei(w) \text{ or }w\in \widehat\nei(v)\}\,,
\end{equation}
a \emph{non-conservative} approach. \\

Theorem~\ref{main} then implies the following strong consistency result for any $G'$ 
with a finite set of vertices $V'$.

\begin{corollary}\label{cor:main2}
Let $G'= (V', E')$ be  an induced sub-graph of $G$ with a finite set of vertices $V'$, and assume
the hypotheses of Theorem~\ref{main} hold.  
Then, for  $c > 0$ (respectively $c>  |A|^2[p_*(|A|-1)]^{-1}$),  if  $|V_n| = o(\log n)$
we have $\widehat E'_{\wedge} = \widehat E'_{\vee} =  E'$ with probability
converging to 1  as $n\to \infty$ (respectively eventually almost surely as $n\to \infty$). 
\end{corollary}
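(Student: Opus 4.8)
The plan is to deduce the corollary from Theorem~\ref{main} applied simultaneously to every vertex of the finite set $V'$. First I would note that the hypothesis $\cup_{v\in V'}(\{v\}\cup\nei(v))\subset V_n$ for all large $n$ forces $\nei(v)\subset V_n$ eventually for each individual $v\in V'$, so that, together with the assumption $|V_n|=o(\log n)$, the hypotheses of Theorem~\ref{main} are met at every node of $V'$. I would also record the elementary monotonicity $p_{\min}(V')\le p_{\min}(v)$ for each $v\in V'$, which gives $|A|^2[p_{\min}(V')(|A|-1)]^{-1}\ge |A|^2[p_{\min}(v)(|A|-1)]^{-1}$; hence a single constant $c$ exceeding the former exceeds the node-wise thresholds of Theorem~\ref{main} for every $v\in V'$, so the strong-consistency conclusion is available at all nodes of $V'$ under one choice of $c$.

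Next I would combine the per-node statements using finiteness of $V'$. In the regime $c>0$, the event $\bigcap_{v\in V'}\{\widehat\nei(v)=\nei(v)\}$ has probability tending to $1$, being a finite intersection of events each of probability tending to $1$. In the regime $c>|A|^2[p_{\min}(V')(|A|-1)]^{-1}$, this same event holds eventually almost surely, since a finite intersection of events that each occur for all sufficiently large $n$ with probability one again occurs for all sufficiently large $n$ with probability one. So it suffices to check that on the event $\bigcap_{v\in V'}\{\widehat\nei(v)=\nei(v)\}$ one has $\widehat E'_\wedge=\widehat E'_\vee=E'$.

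For this last step I would fix a pair $(v,w)\in V'\times V'$ and argue on that event. Then $v\in\widehat\nei(w)$ iff $v\in\nei(w)$, which by the definition \eqref{graph} of $G$ means $(w,v)\in E$, and by Lemma~\ref{undirect} this is equivalent to $(v,w)\in E$; symmetrically $w\in\widehat\nei(v)$ iff $(v,w)\in E$. Consequently the defining conditions of $\widehat E'_\wedge$ in \eqref{conserv-graph} and of $\widehat E'_\vee$ in \eqref{nconserv-graph} are both equivalent to $(v,w)\in E$, and since $v,w\in V'$ this is precisely $(v,w)\in E'$. Hence $\widehat E'_\wedge=\widehat E'_\vee=E'$ on this event, which finishes the proof.

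I do not expect a genuine obstacle: the corollary is essentially a bookkeeping consequence of Theorem~\ref{main}. The only points requiring a little care are the two flagged above — that the ``eventually almost surely'' mode of convergence is preserved under finite intersections, and that replacing $p_{\min}(v)$ by $p_{\min}(V')$ only enlarges the sufficient threshold on $c$ — together with the (automatic) appeal to undirectedness of $G$ via Lemma~\ref{undirect}, which is what makes the conservative and non-conservative reconstructions agree on the good event.
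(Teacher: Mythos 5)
Your proposal is correct and follows essentially the same route as the paper, whose own proof is just the one-line remark that the corollary follows from Theorem~\ref{main} and the finiteness of $V'$; you have simply filled in the (routine) details — the monotonicity of the threshold via $p_{\min}(V')\le p_{\min}(v)$, the preservation of both modes of convergence under finite intersections, and the use of Lemma~\ref{undirect} to identify $\widehat E'_\wedge$ and $\widehat E'_\vee$ with $E'$ on the good event. No gaps.
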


The proofs of all the theoretical results in this section, as well as some auxiliary results, are presented in the Appendix. 

\section{Simulations}\label{simul}

In this section we show the results of a simulation study to evaluate the 
performance of the graph estimators  \eqref{conserv-graph} and  
\eqref{nconserv-graph} on different sample sizes and for different values of
 the penalizing constant $c$.

We simulated a probability distribution on five vertices with alphabet $A=\{0,1,2\}$
and with graph of conditional dependencies given by Figure~\ref{graphs}(c). 
The joint distribution is assumed to factorize as in \eqref{joint-dist-c},
having conditional probabilities given by Table~\ref{table-probs}.

\begin{figure}[t]
\begin{center}
\includegraphics*[scale=0.62]{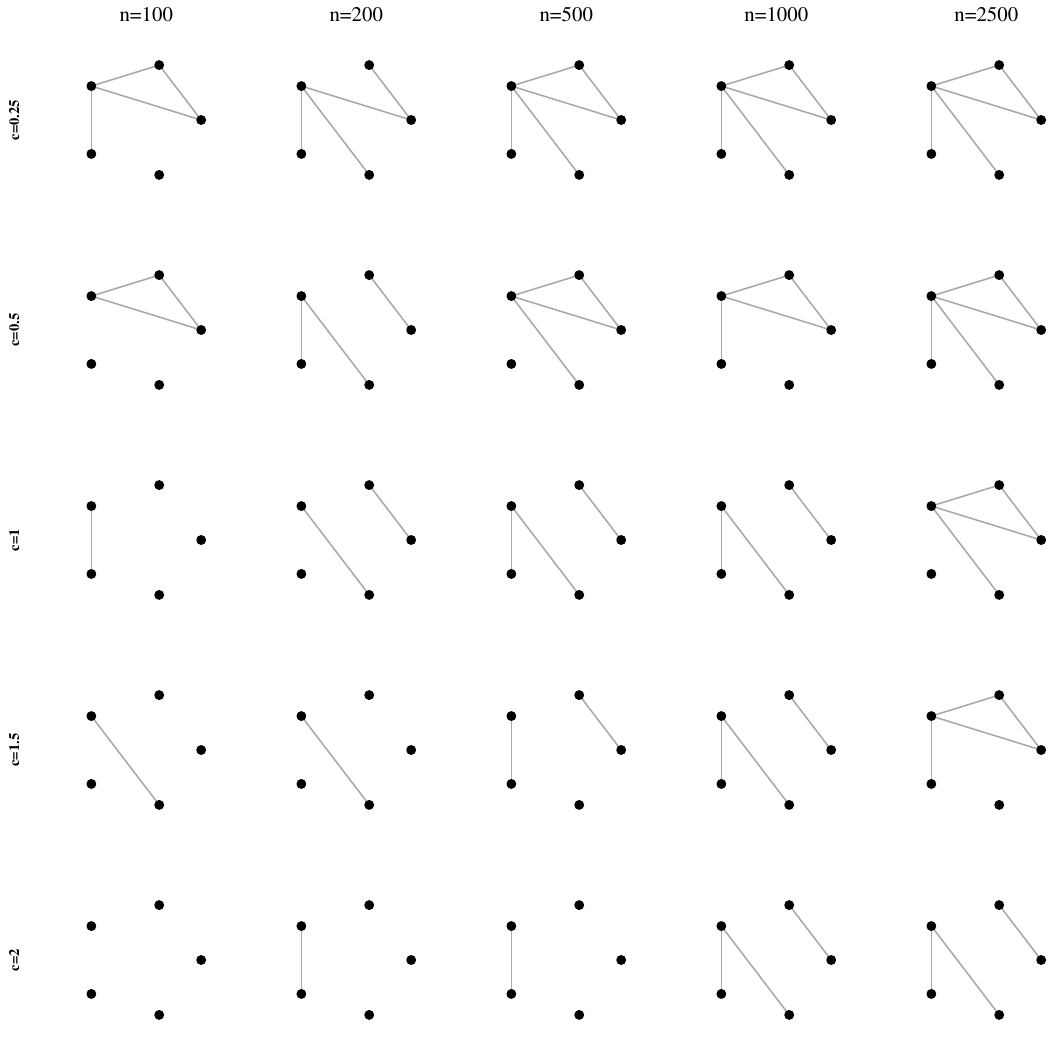}
   \caption{Estimated graph for
    neighbohood approach.}\label{sim-graph-cons}
 \end{center}
\end{figure}

  \begin{figure}[t]
\begin{center}
 \includegraphics*[scale=0.62]{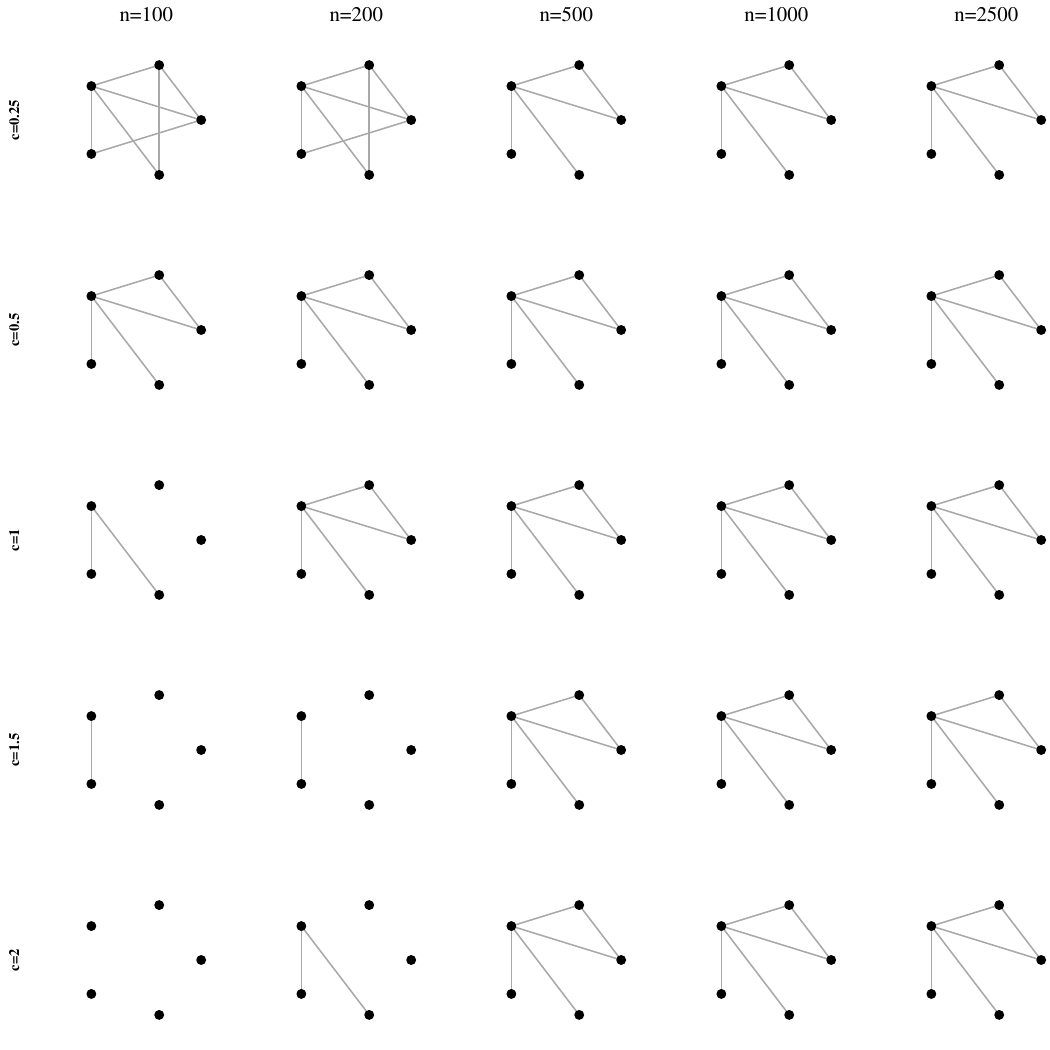}
   \caption{Estimated graph for
    neighbohood approach.}\label{sim-graph-ncons}
\end{center}
\end{figure}

The algorithms implementing the  graph estimators given by  
\eqref{conserv-graph}  (conservative approach) and  \eqref{nconserv-graph} (non-conservative approach) were coded in the {\tt R}
language and are available as a package called {\tt mrfse} in CRAN.
In Figures~\ref{sim-graph-cons} and \ref{sim-graph-ncons}
 we show the results of both approaches, for values of the penalising constant  $c$ in the set $\{0.25,0.5,1,1.5,2\}$
and sample sizes $n$ in the set $\{100,200,500,1000,2500\}$. 
In this example the non-conservative approach seems to converge to the 
underlying graph faster  than the conservative approach. 
To evaluate this difference in a quantitative form, we  compute a numeric value for
 the underestimation error (\emph{ue}), overestimation error (\emph{oe}) and total error (\emph{te}), 
 given by 
\begin{equation}\label{ue}
ue\;=\; \dfrac{\sum_{(v,w)}\mathbf{1}\{(v, w) \in E
  \;\text{and}\; (v, w) \not\in \widehat{E})\}}{\sum_{(v,w)}\mathbf{1}\{(v, w) \in E\}}
\end{equation}
\begin{equation}\label{oe}
oe\;=\; \dfrac{\sum_{(v,w)}\mathbf{1}\{(v, w) \not\in E
  \;\text{and}\; (v, w) \in \widehat{E})\}}{\sum_{(v,w)}\mathbf{1}\{(v, w) \notin E\}}\;
\end{equation}
and
\begin{equation}\label{te}
te\;=\; \dfrac{ oe  \sum_{(v,w)} \mathbf{1}\{(v, w) \in
  E \}  + ue \sum_{(v,w)} \mathbf{1}\{(v, w) \not\in E\} }{|V|(|V|-1)}\,.
\end{equation}
Figure~\ref{graph1-errors} shows an evaluation of \emph{ue}, \emph{oe} 
and \emph{te} for both methods,  used with  constant c = 1, and
with sample sizes ranging from 100 to 10,000. 
The error lines corresponds to the mean errors in 30 runs of both algorithms.  

\begin{figure}[t]
\begin{center}
\includegraphics*[scale=0.47]{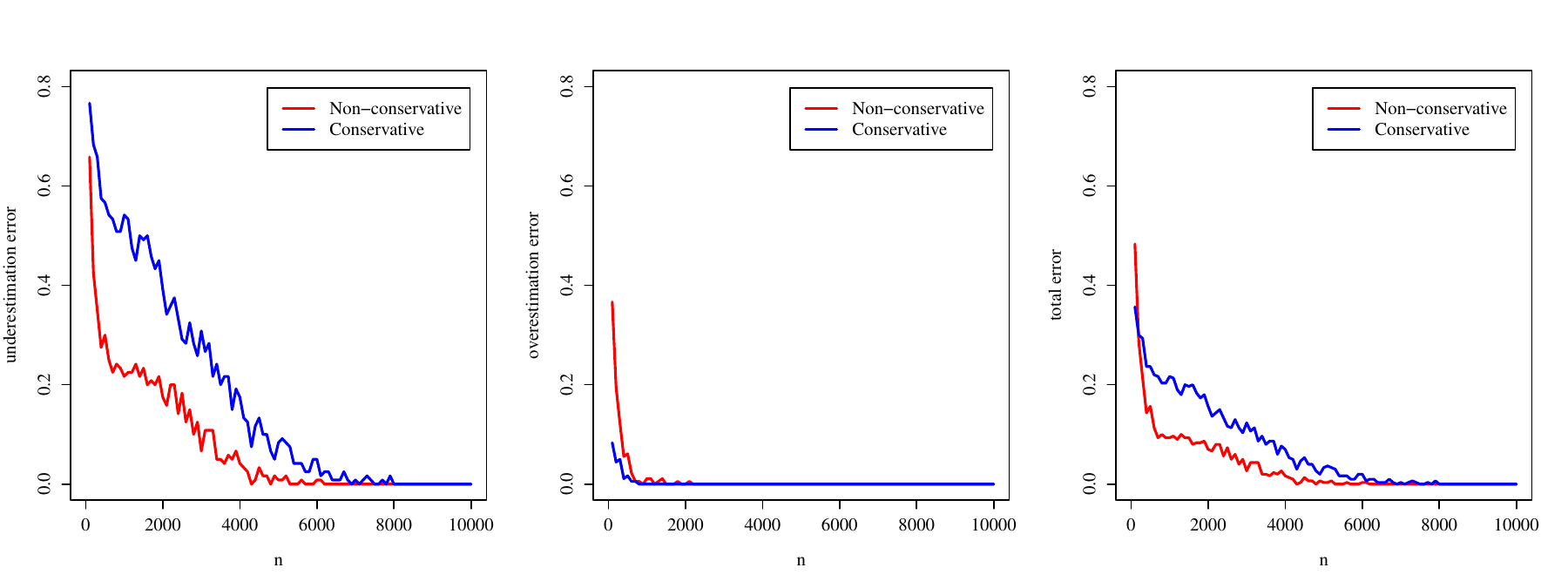}
   \caption{Mean of underestimation error (left), overestimation error (center)
   and total error (right) defined by \eqref{ue}-\eqref{te}, computed on 30 runs of the simulation, for both conservative and non-conservative approaches.}\label{graph1-errors}
 \end{center}
\end{figure}

We also performed simulations for other distributions with different  
graph structures, specifically for the projected Markov chain in five variables 
given by Figure~\ref{graphs}(a) (a graph with few edges) and the complete graph
with five vertices (a graph with a maximum number of edges). 
We also simulated a distribution with 15 nodes, the same number of nodes considered in the application
in Section~\ref{aplic}. In all cases, 
the non-conservative approach seems to perform better than the conservative algorithm. 
The results are reported in the Supplementary material to this article.

\section{Application on real data}\label{aplic}

To illustrate the performance of the estimator on real data we
analyzed a stock index from fifteen countries on different times taken
from the site
\url{https://br.investing.com/indices/world-indices}. The countries
are Brazil, USA, UK, France, India, Japan, Greece, Ireland, South
Africa, Spain, Marroco, Australia, Mexico, China and Saudi Arabia with
stock markets Bovespa, NASDAQ, FTSE 100, CAC 40, Nifty 50, Nikkei 225,
FTSE ATHEX Large Cap, FTSE Ireland, FTSE South Africa, IBEX 35,
Moroccan All Shares, S\&P ASX 200, S\&P BMV IPC, Shanghai Composite and
Tadawul All Share, respectively. We collected 2120 entries where each
entry contains the indicator function of an increasing variation in
the stock index for a given day with respect to the previous day, for
each one of the fifteen stock markets. That is, a stock market for a given
day $d$ is codified as 1 if the stock index at day $d$ is greater than
the stock index at day $d-1$, and 0 otherwise.  The main goal is to
estimate the conditional dependence graph between the codified stock
markets corresponding to the fifteen countries.
The dataset of stock index variation corresponds to subsequent time
points (days) in the period from December $29^{th}$ of 2010 to October
$22^{th}$ of 2018. To reduce sample correlation between subsequent
observations we selected data points with a difference of 4 days.  The
final sample has a total of 530 time points. The datas are available
in \url{https://github.com/rodrigorsdc/ic/tree/master/stock_data}

We applied the two approaches given by \eqref{conserv-graph} and  
\eqref{nconserv-graph} to estimate the conditional dependence graph. We
chose the penalising constant as $c = 0.2$ by 10-Fold Cross-validation.
The resulting graphs with the conservative and non-conservative
approaches are shown in Figure \ref{fig:stock_map_neigh}. In both
cases, the obtained graphs connect countries that are geographically
near, as could be somehow expected. Also, the conservative approach
underestimates a few edges compared to the non-conservative.

\begin{figure}[t!]
\begin{center}
 \includegraphics*[scale=0.25,trim=0 0 0 0,clip=true]{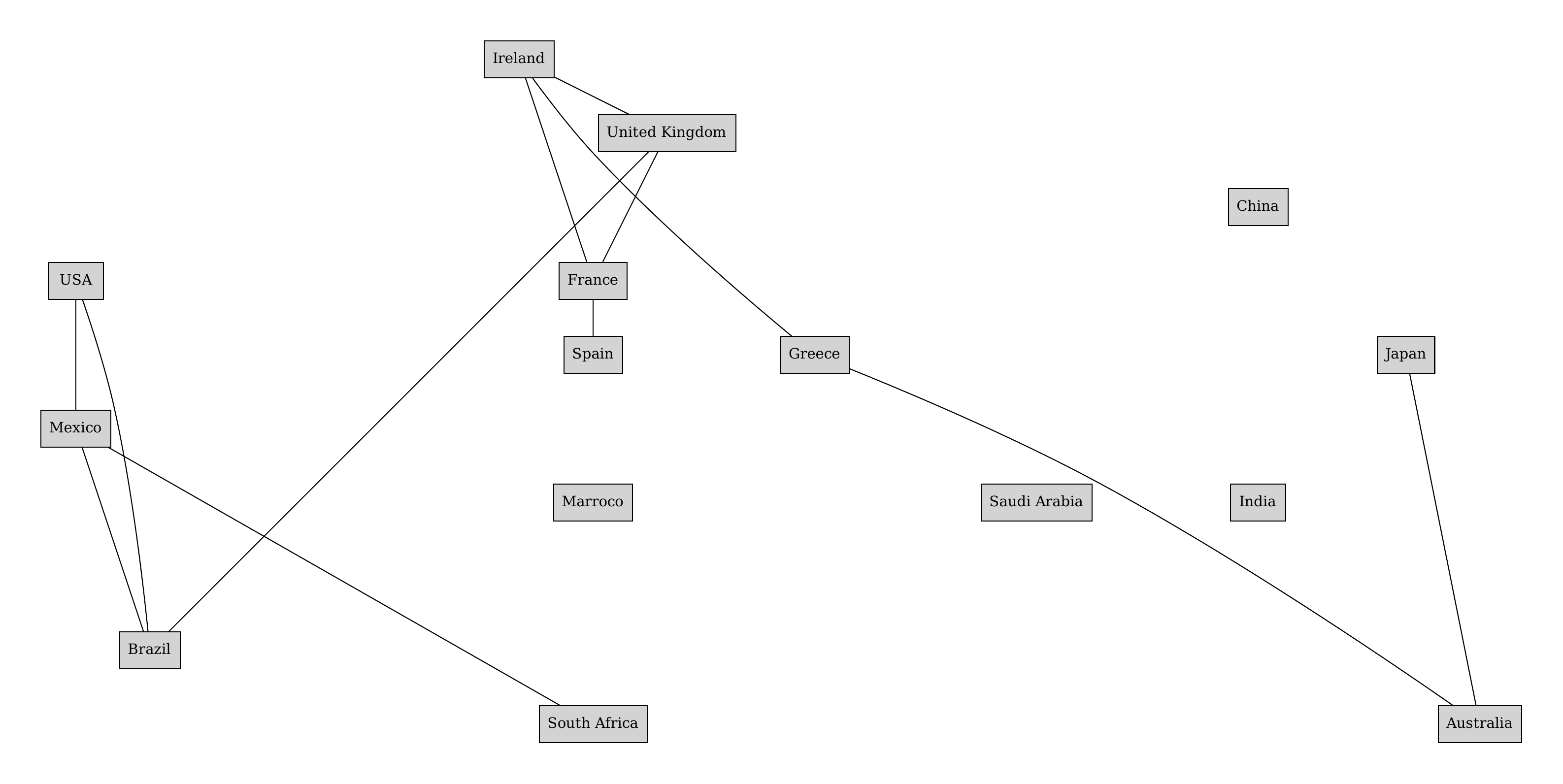}\\[5mm]
\includegraphics*[scale=0.25,trim=0 0 0 0,clip=true]{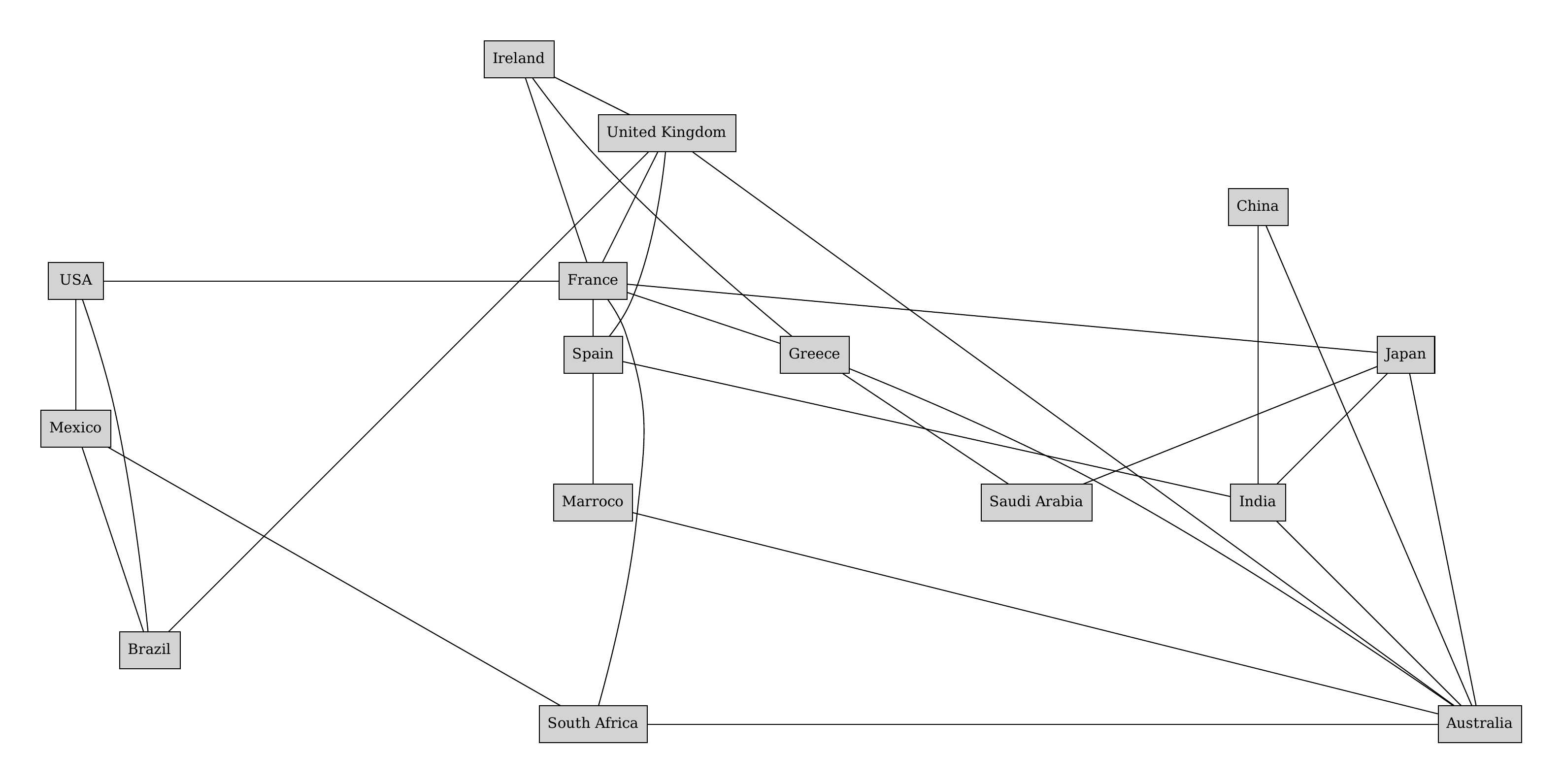}
\end{center}
   \caption{Estimated graphs with the conservative (top) and non-conservative (bottom) approaches. The constant $c$ in \eqref{def:est_bounded} 
     used in both cases was $0.2$.}\label{fig:stock_map_neigh}
\end{figure}

\section*{Discussion}

In this paper we introduced an estimator for the basic neighborhood of a node 
in a general discrete Markov  random field defined on a graph. We  showed that the estimator is consistent for any value of the penalizing constant and is strongly consistent for a sufficiently large value of the constant. This result implies that any finite sub-graph can be recovered with probability one when the sample size diverges, provided 
the set of  observed nodes increases not too fast or contains all the target nodes and its neighborhoods. The proof is based on some  
deviation inequalities given in Proposition~\ref{prop:bound}, a result that is 
derived from a martingale approach appearing in \cite{garivier2011} but that is new in this context of Markov random fields. 
One advantage of our results is that we do not need to assume a positivity condition, namely that all conditional probabilities in the model are strictly 
positive. This allows us to consider  sparse models, that is models that can have many parameters equal to zero and then a low number of significant  parameters, a property that is  appealing on high dimensional contexts. 
 We consider the samples of the Markov random field are independent and 
 identically distributed, but one important question to address in future work is if 
 this method can be generalised to dependent data, as for example the case of mixing processes as considered in  \cite{leonardi2020}. 

\section*{Acknowledgemnts}

This article was produced as part of the activities of FAPESP  Research, 
Innovation and Dissemination Center for Neuromathematics, grant 
2013/07699-0, and FAPESP's project ``Model selection in high dimensions: theoretical properties and applications'',  grant 2019/17734-3, S\~ao Paulo Research 
Foundation. F.L is partially supported by a fellowship from ``Conselho Nacional de Desenvolvimento Científico e Tecnológico -- CNPq'', grant 311763/2020-0.

\section*{Appendix: proof of theoretical results}

Here we present the proofs of  the main results in the paper, namely Lemma~\ref{undirect}, Proposition~\ref{prop:bound}, Theorem~\ref{main} and Corollary~\ref{cor:main2}.
We also prove some auxiliary results needed to  demonstrate the
main theorem in the article. 

\begin{proof}[Proof of Lemma~\ref{undirect}]
Suppose  $w \notin \nei(v)$. Take any $\Delta$  such that 
$\nei(v)\subseteq\Delta, v\notin\Delta$.  Then
\begin{equation}\label{nao-vizinho}
p(a_v | a_{\Delta})=p(a_v | a_{\Delta\setminus \{w\}})\quad \mbox{ for  all }a_{\Delta} \in A^\Delta\text{ with } p(a_\Delta)>0.
\end{equation}
By the definition of conditional probability and (\ref{nao-vizinho}) we have that
\begin{align}\label{prob-igualdade-1}
p(a_v,a_w | a_{\Delta\setminus \{w\}})&=p(a_v|a_{\Delta},a_w) p(a_w|a_{\Delta\setminus\{w\}}) \\
&= p(a_v|a_{\Delta\setminus \{w\}}) p(a_w|a_{\Delta\setminus \{w\}})
\notag
\end{align}
for  all $a_\Delta$ with $p(a_\Delta)>0$. 
But we also have that 
\begin{equation}\label{prob-igualdade-2}
p(a_v,a_w | a_{\Delta\setminus \{w\}})=p(a_w|a_{\Delta\setminus \{w\}},a_v) p(a_v|a_{\Delta\setminus \{w\}}). 
\end{equation}
Therefore, by (\ref{prob-igualdade-1}) and (\ref{prob-igualdade-2}), if  $p(a_v|a_{\Delta\setminus \{w\}}) > 0$ we obtain 
\begin{equation*}
p(a_w|a_{\Delta\setminus \{w\}},a_v)= p(a_w|a_{\Delta\setminus \{w\}})\,.
\end{equation*}
As the equality holds for all $\Delta$ and all  $(a_v, a_{\Delta\setminus \{w\}})$ 
with $p(a_v, a_{\Delta\setminus \{w\}}) >0$ then we conclude that $v\notin\nei(w)$.
\end{proof}

\begin{proof}[Proof of Proposition~\ref{prop:bound}]
First observe that 
\begin{equation}\label{first_eq}
\begin{split}
\P\Bigl( N(a_W) \,\sup_{a_v\in A}& | \hat{p}(a_v |a_W) -  p(a_v |a_W)|^2 \;>\; \delta \log n \Bigr)\\
&\;\leq\; \sum_{a_v\in A} \P\Bigl( N(a_W) \, | \hat{p}(a_v |a_W) -  p(a_v |a_W)|^2 \;>\; \delta \log n \Bigr)
\end{split}
\end{equation}
then we will fix $a_v\in A$ and bound above each term in the right hand 
side separately. For simplifying the notation we write $\hat{p}_n= \hat{p}(a_v |a_W)$, $p= p(a_v |a_W)$, $O_n=N(a_v,a_W)$ and $N_n=N(a_W)$. Observe that 
$\hat{p}_n = O_n/N_n$. 
For $\lambda>0$ define $\phi(\lambda) = \log(1- p+2^{\lambda}p)$. 
Let $W^{\lambda}_0=1$ and for $n\geq 1$ define
\[
W^{\lambda}_n = 2^{\lambda O_n- N_n\phi(\lambda)}\,.
\]
Observe that $W^{\lambda}_n$ is a martingale with respect to 
$\mathcal{F}_{n} = \sigma ( X_{v,W}^{(1:n)}, X_W^{(n+1)})$ and that $\E[\,W^{\lambda}_n\,]=1$.
In fact, conditioned  on $\mathcal{F}_{n}$ we have that
\begin{equation*}
O_{n+1}- O_{n} = \begin{cases}
1\,, & \text{if } x_v^{(n+1)}= a_v, \,x_W^{(n+1)}= a_W\,; \\ 
0\,, & \text{c.c } \,.
\end{cases} 
\end{equation*}
and similarly 
\begin{equation*}
N_{n+1}- N_{n} = \begin{cases}
1\,, & \text{if } x_W^{(n+1)}= a_W\,; \\ 
0\,, & \text{c.c } \,.
\end{cases} 
\end{equation*}
Observe that if $x_W^{(n+1)} = a_W$  then 
\begin{equation}\label{exp_martingale}
\begin{split}
\E\left[\,2^{\lambda(O_{n+1}-O_n)}\,|\, \mathcal{F}_n\,\right]&= \E\left[\,2^{\lambda \mathbf{1}\{x_v^{(n+1)}= a_v\}}\,|\, \mathcal{F}_n\,\right]\\
& = 2^{\phi(\lambda)}\\
& = 2^{(N_{n+1}-N_n) \phi(\lambda)}\,.
\end{split}
\end{equation}
On the other hand, if  $x_W^{(n+1)} \neq a_W$
the equality trivially holds. Then  
rearranging the terms in \eqref{exp_martingale} we conclude that
\[
\E\left[ 2^{\lambda O_{n+1} - N_{n+1} \phi(\lambda)}\,|\,\mathcal{F}_n \right] = 2^{\lambda O_{n} - N_n\phi(\lambda)}
\]
and $W^{\lambda}_n$ is a martingale with respect to $\mathcal{F}_n$.
Now divide the
 interval $\{1,\dots, n\}$ of possible values of $N_n$ into ``slices'' 
 $\{t_{k-1}+1,\dots,t_k\}$ of geometrically increasing size, and treat the slices independently. We take $\alpha = \delta \log n$ and we assume that $n$ is sufficiently large so that $\alpha>1$. Take $\eta = 1/(\alpha-1)$, $t_0=0$ and for $k\geq 1$,  $t_k = \left\lfloor (1+\eta)^k \right\rfloor$.
Let $m$ be the first integer such that $t_m\geq n$, that is 
\[
m=\left\lceil\frac{\log n}{\log (1+\eta)}\right\rceil\,.
\]
Define the events  $B_k = \left\{t_{k-1} < N_n \leq t_k\right\}\cap \bigl\{ N_n\,
 | \hat{p}_n -  p|^2 > \alpha\bigr\}$. We have
\begin{equation}
\label{eq:peeling:basic}
\P\left( N_n\, | \hat{p}_n -  p|^2    \;>\; \alpha\right)  \;\leq\; \P\biggl( \bigcup_{k=1}^m B_k\biggr) \;\leq\; \sum_{k=1}^m \P\left(B_k\right).
\end{equation}
Without loss of generality we can assume that $\hat p \geq p$ (the case $\hat p \leq p$ holds by symmetry).
Observe that $| x - p|^2$ is a continuous increasing function for $x\in [p;1]$, with 
$0\leq |x-p|^2 \leq |1-p|^2$. 
Let $x$ be such that $|x-p|^2 = \alpha /(1+\eta)^k$, that is  we take
\[
x \;=\; \sqrt{\frac{\alpha}{(1+\eta)^k}} + p\,.
\]
Observe that  $x\in [p, 1]$ unless $\alpha /(1+\eta)^k  > |1-p|^2$. But in this case we have that
if $N_n \leq (1+\eta)^k$ then  
\[
\alpha \;>\; (1+\eta) ^k |1-p|^2  \;\geq\; N_n |\hat p_n - p|^2 
\]
so $\P(B_k) = 0$. So we may assume that such an $x$ always 
exists over the non-empty events $B_k$.  Moreover, on $B_k$ we have that 
$|\hat p_n - p|^2 \geq \alpha/N_n \geq \alpha/(1+\eta)^k$ then we must have $\hat p_n \geq x$.
Now take 
$\lambda =  \log(x(1-p))- \log(p(1-x))$. 
It can be verified that  $\lambda x-\phi(\lambda) = d(x;p) \geq |x-p|^2$.
Then  on $B_k$ we have that 
\[
\lambda \hat p_n - \phi(\lambda)\;\geq\; 
\lambda x -\phi(\lambda) \;\geq \; |x-p|^2\;=\;  \frac{\alpha}{(1+\eta)^k}
\;\geq\;  \frac{\alpha}{ (1+\eta)N_n}
\]
therefore 
\begin{align*}
B_k&\;\subset\; \left\{ \lambda \hat p_n - \phi(\lambda)\, >\,  \frac{\alpha}{ (1+\eta)N_n}\right\}\\
&\; \subset\; \left\{  W_n^\lambda \, >\, 2^{\alpha/(1+\eta)}\right\}.
\end{align*}
As 
$\E\bigl[W_n^\lambda \bigr] =1$, 
Markov's  inequality implies that 
\begin{align}
\P\left(B_k \right) &\;\leq\; 
\P\left(W_n^\lambda  > 2^{\alpha/(1+\eta)} \right) \\
&\;\leq \;2^{-\alpha/(1+\eta)}.\nonumber
\end{align}
Finally, by  \eqref{eq:peeling:basic}  we have that 
\[
\P\left( N_n |\hat p_n - p|^2\;>\; \alpha \right) \;\leq\; m\, 2^{-\alpha/(1+\eta)}.
 \]
But as $\eta = 1/(\alpha -1)$, $m=\left\lceil\frac{\log n}{\log(1+\eta)}\right\rceil$ and $\log(1+1/(\alpha-1))\geq 1/\alpha$ we obtain
\[
\P\left( N_n \,|\hat p_n - p|^2\;>\; \alpha \right) \; \leq \; 2 \alpha\log(n) 2 ^{-\alpha }\;=\; \frac{2\delta \log^2(n)}{n^\delta}.
\] 
Finally, by \eqref{first_eq} we obtain that 
\[
\P\Bigl( N(a_W) \,\sup_{a_v\in A} | \hat{p}(a_v |a_W) -  p(a_v |a_W)|^2 \;>\; \delta \log n \Bigr)\;\leq\; \frac{2|A|\delta \log^2 n}{n^\delta}\,.\qedhere
\]
 \end{proof}

Now we state a result controlling the probability in Proposition~\ref{prop:bound} for all  possible neighbourhoods at the same time. 

\begin{proposition}\label{prop:bound2}
For all $\delta>0$, all $v\in V_n$  and $|V_n| = o(\log n)$ we have 
\[
\P\Bigl(\,\sup_{W\subset V_n\setminus\{v\}} \sup_{a_W\in A^W}\sup_{a_v\in A} \; N(a_W) \, | \hat{p}(a_v |a_W) -  p(a_v |a_W)|^2 \;<\, \delta \log n\,\Bigr) \;\rightarrow \;1
\]
when  $n\to\infty$.  Moreover, if $\delta>1$ then the probability equals one for all sufficiently large $n$. 
\end{proposition}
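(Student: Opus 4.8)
The plan is to obtain Proposition~\ref{prop:bound2} from Proposition~\ref{prop:bound} by a union bound over the candidate neighbourhoods $W\subset V_n\setminus\{v\}$ and the configurations $a_W\in A^W$, and then to absorb the resulting combinatorial factor using the hypothesis $|V_n|=o(\log n)$.

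First I would fix $v\in V_n$ and $\delta>0$, and restrict to $n$ large enough that $n\geq\exp(\delta^{-1})$, so that Proposition~\ref{prop:bound} applies to every pair $(W,a_W)$ with $W\subset V_n\setminus\{v\}$ and $a_W\in A^W$: each such pair contributes an event of probability at most $2|A|\delta\log^2 n\,/\,n^{\delta}$. Since the complement of the event in the statement is contained in the union of these events (one only needs to pass from the non-strict inequality in the statement to the strict one in Proposition~\ref{prop:bound}, which is harmless because the claim is quantified over all $\delta>0$, so one may replace $\delta$ throughout by a slightly smaller positive value, kept $>1$ in the second assertion), a union bound gives
\[
\P\Bigl(\sup_{W,a_W,a_v} N(a_W)\,|\hat p(a_v|a_W)-p(a_v|a_W)|^2 \;\geq\; \delta\log n\Bigr)\;\leq\;\Bigl(\sum_{W\subset V_n\setminus\{v\}}|A|^{|W|}\Bigr)\frac{2|A|\delta\log^2 n}{n^{\delta}}.
\]
The inner sum equals $(1+|A|)^{|V_n|-1}$ by the binomial theorem, since $V_n\setminus\{v\}$ has $|V_n|-1$ elements and each subset of size $k$ carries $|A|^{k}$ configurations.

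To finish the first assertion I would invoke $|V_n|=o(\log n)$: this makes $(1+|A|)^{|V_n|-1}=\exp\bigl(O(|V_n|)\bigr)=\exp\bigl(o(\log n)\bigr)=n^{o(1)}$, and since also $\log^2 n=n^{o(1)}$, the right-hand side above is $n^{o(1)-\delta}\to 0$. For the second assertion, assuming $\delta>1$ I would fix $\epsilon\in(0,\delta-1)$; by $|V_n|=o(\log n)$ we have $(1+|A|)^{|V_n|-1}\leq n^{\epsilon}$ for all large $n$, so the bound becomes at most $2|A|\delta\log^2 n\cdot n^{-(\delta-\epsilon)}$ with $\delta-\epsilon>1$, whence $\sum_n\P(\text{bad event at }n)<\infty$. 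The first Borel--Cantelli lemma then gives that, almost surely, only finitely many of these events occur, i.e. $\sup_{W,a_W,a_v}N(a_W)|\hat p(a_v|a_W)-p(a_v|a_W)|^2<\delta\log n$ for all sufficiently large $n$.

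The only genuine difficulty is quantitative: the number of pairs $(W,a_W)$ grows like $(1+|A|)^{|V_n|}$, which must be kept subpolynomial in $n$ in order to beat the $n^{-\delta}$ decay of each term — this is exactly the role of $|V_n|=o(\log n)$ — and for the almost sure statement one needs the extra margin $\delta>1$ to make the resulting series summable. Everything else is a routine union bound followed by Borel--Cantelli.
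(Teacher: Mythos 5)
Your proof is correct and follows essentially the same route as the paper: a union bound over all pairs $(W,a_W)$ using Proposition~\ref{prop:bound}, with the combinatorial factor (which you compute sharply as $(1+|A|)^{|V_n|-1}$ where the paper uses the cruder $2^{|V_n|}|A|^{|V_n|}$) absorbed as $n^{o(1)}$ via $|V_n|=o(\log n)$, followed by Borel--Cantelli for the $\delta>1$ case. Your added care about the threshold $n\geq\exp(\delta^{-1})$ and the strict/non-strict inequality is a small tidiness improvement over the paper's write-up, not a different argument.
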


\begin{proof}
Assume $|V_n| = \epsilon_n\log n$, with $\epsilon_n\to 0$ when $n\to\infty$. By Proposition~\ref{prop:bound} and a union bound we have that 
\begin{equation*}
\begin{split}
\P\Bigl(\sup_{W\subset V_n\setminus\{v\}} \sup_{a_W\in A^W}\sup_{a_v\in A} \; N(a_W) \, | \hat{p}(a_v |a_W) -  &p(a_v |a_W)|^2 \;>\; \delta\log n \Bigr)\\
&\;\leq\;\, 2^{|V_n|}|A|^{|V_n|} \frac{2|A|\delta \log^2 n}{n^\delta}\\
&\;\leq\;\, \frac{c \delta \log^2 n}{n^{\delta-2\epsilon_n}}\,.
\end{split}
\end{equation*}
For $\delta >0$ the bound on the right-hand side converges to 0 and 
it  is summable in $n$ for any $\delta>1$. Then the almost sure convergence follows by the Borel-Cantelli lemma. 
\end{proof}

The following basic result about the K\"ullback-Leibler divergence corresponds to
 \cite[Lemma~6.3]{csiszar2006b}.  We omit its proof here. 

\begin{lemma}\label{KLTV}
For any $P$ and $Q$ we have
\[
D(P;Q)  \;\leq\; \sum_{a\in A\colon Q(a)>0} \frac{[P(a)-Q(a)]^2}{Q(a)}\,.
\]
\end{lemma}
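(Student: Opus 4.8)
The plan is to derive the bound from the single elementary inequality $\log t\le t-1$ (valid for every $t>0$), applied termwise inside the definition \eqref{KL}. The one point requiring attention is the range of summation: if some $a$ has $Q(a)=0$ while $P(a)>0$, then $D(P;Q)=+\infty$ by the convention adopted after \eqref{KL}, so the inequality is only meaningful when $P$ is absolutely continuous with respect to $Q$. In the only place the lemma is invoked ($P=\hat p(\cdot_v|a_W)$, $Q=p(\cdot_v|a_W)$, on the event $N(a_W)>0$) this holds automatically, since $p(a_v|a_W)=0$ implies $N(a_v,a_W)=0$ and hence $\hat p(a_v|a_W)=0$. So I would open the proof by noting that we may assume $\{a:P(a)>0\}\subseteq S$, where $S=\{a\in A:Q(a)>0\}$; then $D(P;Q)=\sum_{a\in S}P(a)\log\frac{P(a)}{Q(a)}$ with every term finite, and $\sum_{a\in S}P(a)=\sum_{a\in S}Q(a)=1$.

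Next, for each $a\in S$ I would apply $\log t\le t-1$ with $t=P(a)/Q(a)$ and multiply through by $P(a)\ge 0$ to obtain
\[
D(P;Q)\;=\;\sum_{a\in S}P(a)\log\frac{P(a)}{Q(a)}\;\le\;\sum_{a\in S}\frac{P(a)\,[P(a)-Q(a)]}{Q(a)}\,.
\]
The final step is to identify the right-hand side with $\sum_{a\in S}\frac{[P(a)-Q(a)]^2}{Q(a)}$. This is pure bookkeeping: termwise, $\frac{[P(a)-Q(a)]^2}{Q(a)}-\frac{P(a)\,[P(a)-Q(a)]}{Q(a)}=Q(a)-P(a)$, and summing over $a\in S$ gives $1-1=0$ (equivalently, both sums equal $\sum_{a\in S}\tfrac{P(a)^2}{Q(a)}-1$). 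Combining these, $D(P;Q)\le\sum_{a\in S}\frac{[P(a)-Q(a)]^2}{Q(a)}$, which is the claim.

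I do not expect a genuine obstacle: the whole argument is the convexity bound $\log t\le t-1$ plus algebra. The only thing to be careful with is the treatment of zeros — making explicit the absolute-continuity assumption under which the inequality holds (and checking it is satisfied in the context where the lemma is used), and keeping the summations restricted to $S$ throughout so that the cancellations above are legitimate.
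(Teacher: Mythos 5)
Your proof is correct. The paper itself omits the proof of this lemma, simply citing Lemma~6.3 of \cite{csiszar2006b}; the argument you give ($\log t\le t-1$ applied to $t=P(a)/Q(a)$, followed by the observation that $\sum_{a\in S}\frac{P(a)[P(a)-Q(a)]}{Q(a)}$ and $\sum_{a\in S}\frac{[P(a)-Q(a)]^2}{Q(a)}$ differ termwise by $Q(a)-P(a)$ and hence agree after summation) is exactly the standard one behind that reference, so there is nothing to compare. Your side remark about zeros is also well taken: as literally stated (``for any $P$ and $Q$''), the inequality fails when $P\not\ll Q$, since the left side is $+\infty$ under the paper's convention while the right side is finite; the implicit hypothesis $\{a:P(a)>0\}\subseteq\{a:Q(a)>0\}$ is needed, and you correctly verify that it holds in the one place the lemma is applied (with $P=\hat p(\cdot_v|a_W)$ and $Q=p(\cdot_v|a_W)$, since $p(a_v|a_W)=0$ forces $N(a_v,a_W)=0$). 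Also note that for $a\in S$ with $P(a)=0$ both sides of your termwise bound vanish, so restricting $\log t\le t-1$ to $t>0$ causes no loss.
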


The next lemma was proved in  \cite[Lemma A.2]{csiszar2006b} for translation invariant Markov random fields. As our setting is different, we include its proof here. 

\begin{lemma} \label{vizinhanca_contida1}
If a neighborhood $W$ of $v\in V$ satisfies
\[
p(a_v|a_W)=p(a_v | a_{\nei(v)} ) 
\]
for  all $a_v \in A$, and all $a_{W\cup\nei(v)}\in A^{W\cup\nei(v)}$ with $p(a_{W\cup\nei(v)})>0$ then $W$ is a Markov neighborhood.
\end{lemma}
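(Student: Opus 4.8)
The plan is to show that $W$ satisfies the defining property \eqref{propriedadeMarkov} of a Markov neighborhood, i.e. that for every finite $\Delta\supset W$ with $v\notin\Delta$ and every $a_v\in A$, $a_\Delta\in A^\Delta$ with $p(a_\Delta)>0$ we have $p(a_v|a_\Delta)=p(a_v|a_W)$. The hypothesis gives us control of $p(a_v|a_W)$ only in terms of $p(a_v|a_{\nei(v)})$, so the strategy is to pass through the basic neighborhood: since $\nei(v)$ is a Markov neighborhood, we already know $p(a_v|a_\Gamma)=p(a_v|a_{\nei(v)})$ for any finite $\Gamma\supset\nei(v)$ not containing $v$.

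First I would fix a finite $\Delta\supset W$ with $v\notin\Delta$, and set $\Gamma=\Delta\cup\nei(v)$. Then $\Gamma\supset\nei(v)$ and $\Gamma\supset W$, and $v\notin\Gamma$. The idea is to compare $p(a_v|a_\Delta)$ with $p(a_v|a_\Gamma)$ and then with $p(a_v|a_W)$. The passage $p(a_v|a_\Gamma)=p(a_v|a_{\nei(v)})$ is immediate from $\nei(v)$ being a Markov neighborhood. The passage $p(a_v|a_{\nei(v)})=p(a_v|a_W)$ is the hypothesis (valid for configurations of positive probability on $W\cup\nei(v)$, which is implied by $p(a_\Gamma)>0$). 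So the remaining, and main, task is the first equality $p(a_v|a_\Delta)=p(a_v|a_\Gamma)$ — that is, enlarging the conditioning set from $\Delta$ to $\Delta\cup\nei(v)$ does not change the conditional law of $X_v$.

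To get that first equality I would argue as in the proof of Lemma~\ref{undirect}: write $\Gamma\setminus\Delta=\{u_1,\dots,u_k\}$ and peel off these vertices one at a time, so it suffices to treat the case $\Gamma=\Delta\cup\{u\}$ with $u\notin\Delta\cup\{v\}$. We want $p(a_v|a_\Delta)=p(a_v|a_\Delta,a_u)$ whenever $p(a_\Delta,a_u)>0$. For this I would exploit that $\nei(v)\subseteq\Gamma\setminus\{u\}$ may fail, so instead I reduce directly: actually the cleanest route is to not peel, but to observe that by the Markov property of $\nei(v)$ applied to the set $\Delta'=\Delta\cup\nei(v)$ (which contains $\nei(v)$) we get $p(a_v|a_{\Delta\cup\nei(v)})=p(a_v|a_{\nei(v)})$, and similarly, if $\nei(v)\subseteq\Delta$, then already $p(a_v|a_\Delta)=p(a_v|a_{\nei(v)})$ and we are done. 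The genuinely nontrivial case is when $\nei(v)\not\subseteq\Delta$; there I would use the conditional-independence reformulation \eqref{local_markov}: since $\nei(v)$ is Markov, $X_v\Perp X_\Phi\mid X_{\nei(v)}$ for $\Phi=\Delta\setminus\nei(v)$, hence $X_v\Perp X_{\Delta}\mid X_{\nei(v)}$ after adding back the $\nei(v)$-coordinates, and this yields $p(a_v\mid a_\Delta)=p(a_v\mid a_{\Delta\cap\nei(v)},a_{\Delta\setminus\nei(v)})=p(a_v\mid a_{\nei(v)\cap\Delta},\dots)$; combining with the same statement for $\Gamma$ in place of $\Delta$ collapses both to $p(a_v|a_{\nei(v)})$. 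I expect the main obstacle to be the careful bookkeeping of which configurations have positive probability so that all the conditional probabilities invoked are well defined — in particular checking that $p(a_{W\cup\nei(v)})>0$ follows from $p(a_\Delta)>0$ when $W\cup\nei(v)\subseteq\Delta$, and handling the marginalization over the extra coordinates so that the hypothesis, stated for full configurations on $W\cup\nei(v)$, can be applied.

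Putting the three equalities together, for every finite $\Delta\supset W$ with $v\notin\Delta$ and every $a_v,a_\Delta$ with $p(a_\Delta)>0$, setting $\Gamma=\Delta\cup\nei(v)$ and using $p(a_\Gamma)>0$ for the relevant sub-configurations, we obtain
\[
p(a_v|a_\Delta)\;=\;p(a_v|a_\Gamma)\;=\;p(a_v|a_{\nei(v)})\;=\;p(a_v|a_W),
\]
which is exactly \eqref{propriedadeMarkov}, so $W$ is a Markov neighborhood of $v$.
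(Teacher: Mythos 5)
Your overall skeleton --- pass from $\Delta$ to $\Gamma=\Delta\cup\nei(v)$, use the Markov property of $\nei(v)$ to get $p(a_v|a_\Gamma)=p(a_v|a_{\nei(v)})$, and then the hypothesis to get $p(a_v|a_{\nei(v)})=p(a_v|a_W)$ --- is exactly the paper's. The gap is in what you call the ``main task,'' the equality $p(a_v|a_\Delta)=p(a_v|a_\Gamma)$, which you try to establish as a standalone first step from the local Markov property of $\nei(v)$. That cannot work: the local Markov property \eqref{local_markov} only controls conditional probabilities given sets that \emph{contain} $\nei(v)$, and when $\nei(v)\not\subseteq\Delta$ it says nothing about $p(a_v|a_\Delta)$. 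Indeed your displayed chain in the ``genuinely nontrivial case'' ends at $p(a_v|a_{\nei(v)})$, which is not even a function of $a_\Delta$ when $\nei(v)\not\subseteq\Delta$, so the claimed collapse of $p(a_v|a_\Delta)$ to that quantity is ill-formed. In general $p(a_v|a_\Delta)$ is a genuine average of $p(a_v|a_\Gamma)$ over the missing coordinates and the two need not agree pointwise; here they agree only \emph{because} of the other two equalities, so proving your step (1) before invoking the hypothesis gets the logical order backwards.

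The missing idea is the tower/averaging step, which is what the paper's closing sentence (``because $W\subseteq\Delta\subseteq\nei(v)\cup\Delta$'') encodes. First combine the Markov property of $\nei(v)$ with the hypothesis to get, for every $a_\Gamma$ with $p(a_\Gamma)>0$,
\[
p(a_v|a_\Gamma)\;=\;p(a_v|a_{\nei(v)})\;=\;p(a_v|a_W),
\]
a quantity depending on $a_\Gamma$ only through $a_W$, hence only through $a_\Delta$. Then
\[
p(a_v|a_\Delta)\;=\;\sum_{b} p\bigl(a_{\nei(v)\setminus\Delta}=b\,\big|\,a_\Delta\bigr)\,p(a_v|a_\Delta,b)\;=\;\sum_b p(b|a_\Delta)\,p(a_v|a_W)\;=\;p(a_v|a_W),
\]
where the sum runs over $b$ with $p(a_\Delta,b)>0$, so each term is covered by the previous display, and the positivity $p(a_{W\cup\nei(v)})>0$ needed for the hypothesis holds for the corresponding sub-configurations because marginals of positive-probability configurations are positive. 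With this step inserted in place of your attempted direct proof of $p(a_v|a_\Delta)=p(a_v|a_\Gamma)$, the argument closes; what you labelled as bookkeeping about positive-probability configurations and ``handling the marginalization'' is in fact the substantive step of the proof, not an afterthought.
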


\begin{proof}
We have to show that for any $\Delta\subset V$ finite,  with $\Delta \supset W$,
\begin{equation}\label{igualdade2}
  p(a_v|a_\Delta) = p(a_v | a_W)
\end{equation}
for all $a_v \in A$ and all $a_{\Delta}\in A^{\Delta}$ with $p(a_{\Delta})>0$. 
As $\nei(v)$ is a Markov neighborhood, the lemma's condition implies
\[
p(a_v|a_W)= p(a_v|a_{\nei(v)})= p(a_v|a_{\nei(v) \cup \Delta})
\]
or all $a_v \in A$ and all $a_{\nei(v) \cup \Delta}\in A^{\nei(v) \cup \Delta}$ with $p(a_{\nei(v) \cup \Delta})>0$.
So (\ref{igualdade2}) follows,  because $W\subseteq \Delta \subseteq \nei(v) \cup \Delta$.
\end{proof}

The following proposition guarantees uniform control of all empirical marginal probabilities for subsets of variables in $V_n$. 

\begin{proposition}\label{joint_typ}
Let $\{V_n\}_{n\in \N}$ be such that $|V_n| = o(\log n)$.
%
Then for all $\delta>2$ we have 
\[
| \hat{p}(a_W) - p(a_W) |\; < \; \sqrt{\frac{\delta \log n}{n}}
\]
simultaneously for all $W \subseteq V_n$ and $a_W \in A^{W}$, eventually almost surely as $n \rightarrow\infty$.
\end{proposition}

\begin{proof}
For $W \subset V_n$ and $a_W \in A^{W}$ define 
\[
Y_i(a_W) = \mathbf{1}\{ x_{W}^{(i)}= a_W\} - p(a_W)\,, \quad\; i=1,2,\ldots,n.
\]
Note that $ \E (Y_i(a_W)) =0 $ and $|Y_i(a_W)|\leq 1$ for all $i=1,2,\ldots,n$. Then by Hoeffding's Inequality we have that 
\begin{align*}
\P \Bigl( \, \Bigl| \frac{1}{n} \sum_{i=1}^n Y_i(a_W) - & \E\Bigl[ \frac{1}{n} \sum_{i=1}^n Y_i(a_W) \Bigr]  \Bigr|  \geq t \Bigr)  \;\leq\; 2 \exp \bigl( - \frac{n t^2}{2} \bigr)\,. 
\end{align*}
Observe that
\[
\frac{1}{n}\sum_{i=1}^n Y_i(a_W) = \frac{N(a_W)}{n} -p(a_W) 
\]
and 
\[
\E\Bigl[ \frac{1}{n} \sum_{i=1}^n Y_i(a_W) \Bigr]  = 0\,.
\]
Therefore
\[
 \P \Bigl( \Bigl| \frac{N(a_W)}{n} - p(a_W)  \Bigr| \geq t \Bigr)\; \leq\; 2 \exp\bigl(- \frac{nt^2}{2} \bigr)\,.
 \] 
Taking $t = \sqrt{\frac{\delta \log n}{n}}$ we have that
\begin{align*}
\P \Bigl( \bigl|  \hat{p}(a_W) - &p(a_W) \bigl| \geq  \sqrt{\frac{\delta \log n}{n}} \mbox{ for some } W \subset V_n \mbox{ and } a_W \in A^W  \Bigr) \\
&\leq \;\sum_{W \subset V_n} \sum_{a_W \in A^W} \P \Bigl( \bigl| \hat{p}(a_W) - p(a_W) \bigr| \geq  \sqrt{\frac{\delta \log n}{n}} \Bigr)  \\ 
&\leq \;2^{|V _n|} |A|^{|V _n|}  \,2 \exp\bigl(- \frac{\delta \log n}{2} \bigr)  
\end{align*}
which is summable in $n$ for $\delta > 2$. This completes the proof.
\end{proof}


\begin{proof}[Proof of Theorem~\ref{main}]
Denote by
\[
\pml(x_v^{(1:n)}| x_{W}^{(1:n)})  \;=\;\log \hat\P(x_v^{(1:n)}| x_{W}^{(1:n)})  - c|A|^{|W|}\log n\,,
\]
where $\hat\P(x_v^{(1:n)}| x_{W}^{(1:n)})$ is given by \eqref{hatp}. 
If $V_n \setminus \{v\}$ is the bounding set for the candidate neighborhoods of vertex $v$ and $\nei (v)$ is the basic neighborhood of $v$, we need 
to prove that for any $c>0$ 
\begin{equation}\label{mainineq}
\max_{W \subset V_n \setminus \{v\}, W \neq \nei(v)} \pml(x_v^{(1:n)}| x_{W}^{(1:n)}) \; <\;  \pml(x_v^{(1:n)}| x_{\nei(v)}^{(1:n)})
\end{equation}
with probability converging to 1 when $n\to\infty$. Moreover, for $c > [p_{\min}(v)(|A|-1)]^{-1}$ we need to prove that \eqref{mainineq} holds eventually almost surely as $n\to\infty$.
We divide the proof in two cases, 
showing that
\begin{equation}\label{basicineq}
\max_{ W \in \mathcal{B}_i} \pml(x_v^{(1:n)}| x_{W}^{(1:n)}) \; <\;  \pml(x_v^{(1:n)}| x_{\nei(v)}^{(1:n)})
\end{equation}
with high probability or almost surely 
as $n\to\infty$, depending on the value of $c$,  for $i=1,2$ where
\begin{itemize}
\item[(a)] $\mathcal{B}_1 = \{W \subset V_n \setminus \{v\} \colon \nei(v) \subsetneq W \}$
\item[(b)] $\mathcal{B}_2 = \{W \subset V_n \setminus \{v\} \colon \nei(v) \not\subset W\}$
\end{itemize} 
For case (a),  observe that for all $W \in \mathcal{B}_1$
\begin{align}\label{eq:pml}
\pml(x_v^{(1:n)}&| x_{\nei(v)}^{(1:n)}) - \pml(x_v^{(1:n)}| x_{W}^{(1:n)})
\; = \\
&\qquad c\,(|A|^{|W|} - |A|^{|\nei (v)|})\log n \; - \sum_{a_v,a_W\in A^{|W|+1}} N(a_v,a_W) \log \frac{\hat p(a_v|a_W)}{\hat p(a_v|a_{\nei (v)})}\,.\notag
\end{align}
As these empirical probabilities are the maximum likelihood estimators of the conditional probabilities and $\nei(v)\subset W$  we have that
\begin{align*}
  \sum_{a_v,a_W\in A^{W+1}} N(a_v,a_W) \log \hat p(a_v|a_{\nei (v)}) \;&\geq\; \sum_{a_v,a_W\in A^{W+1}} N(a_v,a_W)\log p(a_v|a_{\nei (v)})\\
 &= \;  \sum_{a_v,a_W\in A^{W+1}} N(a_v,a_W) \log p(a_v|a_W)\,.
\end{align*}
Therefore, \eqref{eq:pml} can be lower-bounded by
\begin{equation}\label{diff}
c\, \left( 1 - \frac{1}{|A|}\right)  |A|^{|W|}\log n\; -\sum_{a_v,a_W\in A^{W+1}} N(a_v,a_W) \log \frac{\hat p(a_v|a_W)}{p(a_v|a_W)}\,.
\end{equation}
Note that  
\begin{align*}
\sum_{a_v,a_W\in A^{W+1}} N(a_v,a_W) \log \frac{\hat p(a_v|a_W)}{p(a_v|a_W)} 
 =  \sum_{ a_W\in A^{W}} N(a_W)D(\hat p(\cdot_v|a_W) \,;\,p(\cdot_v|a_W) )\,,
\end{align*}
where $D$ denotes the K\"ullback-Leibler divergence, see \eqref{KL}. Therefore we have, by Lemma~\ref{KLTV}, that 
\begin{equation}\label{eqD}
\begin{split}
\sum_{a_W\in A^{W}} N(a_W)&D(\hat p(\cdot_v|a_W) \,;\,p(\cdot_v|a_W) ) \\
&\leq\; \sum_{a_W\in A^{W}} N(a_W) \sum_{a_v\in A} \frac{[\,\hat p(a_v|a_W) - p(a_v|a_W)\,]^2}{p(a_v|a_W)}.
\end{split}
\end{equation}
Then, by Proposition~\ref{prop:bound2} with $\delta>0$ and \eqref{eqD} we have, with probability converging to 1 that 
\begin{align*}
\sup_{W \in \mathcal{B}_1}\sum_{a_W\in A^{W}} N(a_W)&D(\hat p(\cdot_v|a_W) \,;\,p(\cdot_v|a_W) ) \\
&\leq\;   \frac{\delta |A|^{|W|+1}  \log n }{p_*}
\end{align*}
and this holds eventually almost surely if $\delta>1$. 
Then the difference \eqref{diff} can be lower bounded by 
\begin{equation*}
c\, \left( 1 - \frac{1}{|A|}\right)  |A|^{|W|}\log n -  \frac{\delta |A|^{|W|+1}  \log n }{p_*}\;>\; 0
\end{equation*}
if $\delta < c (|A|-1)|A|^{-2}p_*$. Then, for any $c>0$ there exists a sufficiently small $\delta>0$
such that 
\[
\max_{W \in \mathcal{B}_1}  \pml(x_v^{(1:n)}| x_{W}^{(1:n)}) \; <\;  \pml(x_v^{(1:n)}| x_{\nei(v)}^{(1:n)})  
\]
with probability converging to one. 
Moreover, if $c > |A|^2[p_*(|A|-1)]^{-1}$ we can take $\delta>1$ 
in Proposition~\ref{prop:bound2} and we have that this inequality holds
eventually almost surely as $n\to\infty$. This completes the proof of \eqref{basicineq} for case  (a). \\
Finally, to prove \eqref{basicineq} in case (b)
we will first prove that 
\[
\max_{W \in \mathcal{B}_2}  \pml(x_v^{(1:n)}| x_{W}^{(1:n)}) \; \leq \;  \pml(x_v^{(1:n)}| x_{V_n \setminus \{v\} }^{(1:n)}) 
\]
eventually almost surely as $n\to\infty$. This inequality together with case (a) will imply  \eqref{basicineq} for $i=2$.  Note that we have
\begin{align*}\label{eq:pml_under}
\pml(x_v^{(1:n)}&| x_{V_n \setminus \{v\}}^{(1:n)}) - \pml(x_v^{(1:n)}| x_{W}^{(1:n)})
\\
=\; &\sum_{a_{V_n}\in A^{V_n}} N(a_v,a_{V_n \setminus \{v\}}) \log \frac{\hat p(a_v|a_{V_n \setminus \{v\}})}{\hat p(a_v|a_W)} - c\,(|A|^{|V_n|-1} - |A|^{|W|})\log n \\
= \; &\; n\Biggl[\;\sum_{a_{V_n}\in A^{V_n}} \frac{N(a_{V_n})}{n} \log \frac{\hat p(a_v|a_{V_n\setminus \{v\}})}{\hat p(a_v|a_W)} - c\,(|A|^{|V_n|-1} - |A|^{|W|}) \frac{\log n}{n}\Biggr] \,.
\end{align*}
Observe that for the second term into de brackets we have 
\[
 c\,(|A|^{|V_n|-1} - |A|^{|W|}) \frac{\log n}{n}\;\longrightarrow\; 0
 \]
when $n\to\infty$, because we are assuming $|V_n|=o(\log n)$.  For the first term
by summing and subtracting 
$\frac{N(a_{V_n})}n \log p(a_v|a_{W})$ 
inside the sum  we have that 
\begin{align} \label{eq1}
 \sum_{a_{V_n}} \;\,\frac{N(a_{V_n})}{n}& \log \frac{\hat p(a_v|a_{V_n\setminus \{v\}})}{\hat p(a_v|a_{W})} \nonumber \\
= &\;  \sum_{a_{V_n}} \left[\frac{N(a_{V_n})}{n} \log \frac{\hat p(a_v|a_{V_n\setminus \{v\}})}{ p(a_v|a_{W})} -  \frac{N(a_{V_n})}{n} \log \frac{\hat p(a_v|a_W)}{ p(a_v|a_{W})} \right].  
\end{align}
We divide again the expression in two parts. By one hand, by looking at the second term of the sum in \eqref{eq1} we have that 
\begin{align*}
\sum_{a_{V_n}} \frac{N(a_{V_n})}{n} \log \frac{\hat p(a_v|a_W)}{ p(a_v|a_{W})} 
&= \sum_{(a_v,a_W)\in A^{1+W}} \frac{N(a_v,a_W)}{n} \log \frac{\hat p(a_v|a_W)}{ p(a_v|a_W)} \nonumber \\
&= \sum_{(a_v,a_W)\in A^{1+W}} \frac{N(a_W)}{n} \hat p(a_v|a_W) \log \frac{\hat p(a_v|a_W)}{ p(a_v|a_W)} \nonumber \\
&= \sum_{a_W\in A^{W}} \frac{N(a_W)}{n} D (\hat p(\cdot_v|a_{W})\,;\, p(\cdot_v|a_W)). \nonumber 
\end{align*}
By Lemma~\ref{KLTV} and Proposition~\ref{prop:bound2}  we have that
\begin{equation}\label{eq2}
\begin{split}
\max_{W \in \mathcal{B}_2}\; \sum_{a_W\in A^{W}} \frac{N(a_W)}{n} D (\hat p(\cdot_v|a_{W})\,;\, p(\cdot_v|a_W))\;&\leq\;  \sum_{a_W\in A^{W}} \frac{N(a_W)}{n} \sum_{a_v\in A} \frac{[\,\hat p(a_v|a_W) - p(a_v|a_W)\,]^2}{p(a_v|a_W)}\\
&\leq\; \, \max_{W \in \mathcal{B}_2} \frac{ |A|^{|W|+1}\delta \log n}{p_*\,n}\;\longrightarrow\; 0
\end{split}
\end{equation}
eventually almost surely as $n \to \infty$, for $\delta>1$. On the other hand,
 as $\hat p(a_v|a_{V_n\setminus \{v\}})$ are
the maximum likelihood estimators of $ p(a_v|a_{V_n\setminus \{v\}})$ and 
 as $V_n$ will eventually contain $\nei(v)$, the first term in the sum \eqref{eq1} can 
 be lower-bounded by
\begin{equation} \label{eq3}
\begin{split}
 \sum_{a_{V_n}} \frac{N(a_{V_n})}{n} \log \frac{p(a_v|a_{V_n\setminus \{v\}})}{ p(a_v|a_{W})} &\;=\;  \sum_{a_v}\sum_{a_{W\cup\nei(v)}} \frac{N(a_v, a_{W\cup\nei(v)})}{n}  \log \frac{p(a_v|a_{\nei(v)})}{ p(a_v|a_{W})} \,.
 \end{split}
\end{equation}
By Proposition~\ref{joint_typ} we have that 
\[
\frac{N(a_v, a_{W\cup\nei(v)})}{n} \;\geq \; p(a_v,a_{W\cup\nei(v)}) - \sqrt{\frac{3\log n}{n}}
\]
eventually almost surely, simultaneously for all $W$ and all $(a_v, a_{W\cup\nei(v)}))$. Then, \eqref{eq3} can be lower bounded by 
\begin{equation} 
\begin{split}
\sum_{a_{W\cup\nei(v)}} p(a_{W\cup\nei(v)}) D (p(\cdot_v|a_{\nei(v)})\,;\, p(\cdot_v|a_W)) -  \sqrt{\frac{3\log n}{n}} \sum_{a_{W\cup\nei(v)}}  \log \frac{p(a_v|a_{\nei(v)})}{ p(a_v|a_{W})} 
 &\;\geq \;  \frac{\alpha_*}{2}
 \end{split}
\end{equation}
eventually almost surely  as  $n\to\infty$. 
Therefore
\[
\max_{W \in \mathcal{B}_2} \; \pml(x_v^{(1:n)}| x_{W}^{(1:n)}) \; \leq \;  \pml(x_v^{(1:n)}| x_{V_n \setminus \{v\} }^{(1:n)}) 
\]
eventually almost surely as $n\to\infty$.
As $V_n$ will  contain $\nei(v)$ for $n$ sufficiently large, then $V_n \setminus \{v\} \in \mathcal{B}_1$ for such values of $n$. Then, 
by case (a) we have 
\[
\pml(x_v^{(1:n)}| x_{V_n \setminus \{v\} }^{(1:n)})  \; \leq \; \max_{W \in \mathcal{B}_1}  \pml(x_v^{(1:n)}| x_{W}^{(1:n)}) \; < \; \pml(x_v^{(1:n)}| x_{\nei(v) }^{(1:n)})
\]
eventually almost surely as $n\to\infty$, and this finishes the proof of case (b). 
By combining the results of the two cases, we conclude that 
\[
\max_{W \subset V_n \setminus \{v\}, W \neq \nei(v)} \pml(x_v^{(1:n)}| x_{W}^{(1:n)}) \; <\;  \pml(x_v^{(1:n)}| x_{\nei(v)}^{(1:n)})
\]
and that $\widehat \nei(v)=\nei(v)$,  with probability converging to 1 for all $c>0$, or eventually almost surely as $n \to \infty$, for 
$c > |A|^2[p_*(|A|-1)]^{-1}$.
\end{proof}

\begin{proof}[Proof of Corollary~\ref{cor:main2}]
The proof of the corollary follows from Theorem \ref{main} and the fact that $V'$ is finite.
\end{proof}

\bibliographystyle{plain} 
\bibliography{references}  

\end{document}